%% file: main.tex
\documentclass[a4paper,onecolumn,11pt,unpublished]{quantumarticle}
\pdfoutput=1
\input{style}

\title{A New Method For Manipulating Circuits, Application To Quantum Adders}
\author{William Schober}
\author{Scott Wesley}
\date{August 2026}

\begin{document}

\maketitle

\begin{abstract}
    We use a new technique for manipulating controlled quantum circuits to convert between two distinct types of quantum adders, one based on the Quantum Fourier Transform and the other based on the Ripple-Carry technique from classical reversible logic.
    This conversion takes the form of an explicit gate-level transpilation.
    We also present a new quantum adder with a natural interpretation as a kind of Carry-Lookahead adder that uses no ancillas.
\end{abstract}

\input{1_introduction}
\input{2_qft_sandwich}
\input{3_qft_to_cla}
\input{4_cla_to_rc}
\input{5_conclusion}

\bibliographystyle{quantum}
\bibliography{schober_bibliography,custom_bibliography}


\end{document}

%% file: style.tex
\usepackage{graphicx} 
\usepackage{amsmath}
\usepackage{amsfonts}
\usepackage{amssymb}
\usepackage{amsthm}
\usepackage{framed} 
\usepackage{thmtools} 
\usepackage{subcaption}

\usepackage{algorithm} 
\usepackage[noend]{algpseudocode} 

\usepackage{xfrac} 
\usepackage{anyfontsize} 
\usepackage{braket} 
\usepackage{xspace} 

\usepackage{xcolor} 
\usepackage{lipsum} 
\usepackage{layout} 

\usepackage{hyperref}
\usepackage[capitalise]{cleveref}

\usepackage{tikz}
\usetikzlibrary{quantikz2}

\usetikzlibrary{external}
\newcommand{\circuitimg}[1]{\raisebox{-0.5\height}{\includegraphics{#1}}} 

\tikzcdset{
  classical gap/.initial=0.05cm,
}

\tikzset{
ggroup/.append style={inner sep=0pt},
phase label/.append style={label position=90},
}

\theoremstyle{plain}
\newtheorem{theorem}{Theorem}[section]

\theoremstyle{definition}
\newtheorem{proposition}[theorem]{Proposition}
\newtheorem{corollary}[theorem]{Corollary}

\usepackage{stmaryrd} 
\SetSymbolFont{stmry}{bold}{U}{stmry}{m}{n}
\DeclarePairedDelimiter\sem{\llbracket}{\rrbracket}
\DeclareMathOperator\supp{supp}
\DeclareMathOperator\Log{Log}

\usepackage{pict2e} 
\makeatletter
\newcommand{\bigcomp}{%
  \DOTSB
  \mathop{\vphantom{\sum}\mathpalette\bigcomp@\relax}%
  \slimits@
}
\newcommand{\bigcomp@}[2]{%
  \begingroup\m@th
  \sbox\z@{$#1\sum$}%
  \setlength{\unitlength}{0.9\dimexpr\ht\z@+\dp\z@}%
  \vcenter{\hbox{%
    \begin{picture}(1,1)
    \bigcomp@linethickness{#1}
    \put(0.5,0.5){\circle{1}}
    \end{picture}%
  }}%
  \endgroup
}
\newcommand{\bigcomp@linethickness}[1]{%
  \linethickness{%
      \ifx#1\displaystyle 2\fontdimen8\textfont\else
      \ifx#1\textstyle 1.65\fontdimen8\textfont\else
      \ifx#1\scriptstyle 1.65\fontdimen8\scriptfont\else
      1.65\fontdimen8\scriptscriptfont\fi\fi\fi 3
  }%
}
\makeatother

%% file: 1_introduction.tex
\section{Introduction}\label{sec:introduction}

In this paper we demonstrate the use of a new circuit connective called the \textit{control product} by explicitly convert between two types of quantum adder circuits.
Quantum adders fall broadly into two categories: the Quantum Fourier Transform adder~\cite{draperAdditionQuantumComputer2000}, and the adders based on classical reversible technique. 
The latter contains two main families, those based on the Ripple-Carry~\cite{vedralQuantumNetworksElementary1996,cuccaroNewQuantumRipplecarry2004,takahashiQuantumAdditionCircuits2009,remaudAncillaFreeQuantumAdder2025} technique and those based on the Carry-Lookahead~\cite{draperLogarithmicdepthQuantumCarrylookahead2004,takahashiQuantumAdditionCircuits2009,mogensenReversibleInPlaceCarryLookahead2019} technique.
While the connections between Ripple-Carry and Carry-Lookahead adders have been studied recently~\cite{remaudAncillaFreeQuantumAdder2025,remaudQuantumAddersStructural2025}, the connection between the classical reversible adders and the Quantum Fourier Transform adder remains unknown.
The difficulty in connecting the two categories is twofold. 
First, classical reversible adders are described using the gateset $\{X,\text{CNOT},\text{Toffoli}\}$, while the Quantum Fourier Transform adder is described used parameterized $Z$ rotation gates by small angles.
Second, the natural recursive structure of the Quantum Fourier Transform involves both a prefix and a suffix, while the classical reversible adders have a variety of internal structures like recursive ladders of CNOT or Toffoli gates.

We close this gap by explicitly transpiling the Quantum Fourier Transform adder~\cite{draperAdditionQuantumComputer2000} to one of the Ripple-Carry adders~\cite{takahashiQuantumAdditionCircuits2009}.
Along the way we produce many intermediate quantum adder circuits which range between the Quantum Fourier Transform adder and the Ripple-Carry adder, essentially producing a path through the space of quantum adders.
In~\cref{sec:qft_to_cla} we highlight one such intermediate quantum adder, a novel kind of Carry-Lookahead adder that is the first of its kind to use no ancilla qubits.
The technique used to perform this transpilation involves \textit{hierarchical quantum circuits}~\cite{SchoberWesley2026}, a slightly extended quantum circuit language whose main feature is the control product.
Hierarchical quantum circuits are more expressive than standard quantum circuits and allow for some circuit transformations that are difficult to find using standard quantum circuits.
Hierarchical quantum circuits represent unitary matrices, and therefore cannot represent anything that could not be compiled into a standard quantum circuit.
This distinguishes hierarchical quantum circuits from more exotic languages like the ZX-calculus~\cite{weteringZXcalculusWorkingQuantum2020}, which can represent a larger class of linear maps and transformations between them, but also suffers from computational hardness problems for circuit synthesis~\cite{beaudrapCircuitExtractionZXdiagrams2022}.

The paper is structured as follows.
In~\cref{sec:introduction} we introduce the necessary background on hierarchical quantum circuits and quantum adders.
In~\cref{sec:qft_sandwich} we write the Quantum Fourier Transform adder~\cite{draperAdditionQuantumComputer2000} in recursive form to prepare for the transpilation.
In~\cref{sec:qft_to_cla} we convert the Quantum Fourier Transform Adder into a novel Carry-Lookahead adder by merging and canceling its powers to produce a quantum adder in prefix form.
Finally in~\cref{sec:cla_to_rc} we convert this Carry-Lookahead Adder into the Ripple-Carry adder~\cite{takahashiQuantumAdditionCircuits2009}, completing the transpilation.

\subsection{Hierarchical Quantum Circuits}

\begin{figure}[t]
    \begin{subfigure}[b]{0.16\textwidth}
        \centering
        \circuitimg{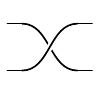}
        \caption{$\textsf{swap}: 2 \to 2$.}
    \end{subfigure}
    \begin{subfigure}[b]{0.16\textwidth}
        \centering
        \circuitimg{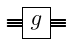}
        \caption{$g: n \to n$.}
    \end{subfigure}
    \begin{subfigure}[b]{0.16\textwidth}
        \centering
        \circuitimg{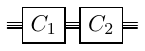}
        \caption{$C_1 \circ C_2$.}
    \end{subfigure}
    \begin{subfigure}[b]{0.16\textwidth}
        \centering
        \circuitimg{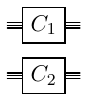}
        \caption{$C_1 \otimes C_2$.}
    \end{subfigure}
    \begin{subfigure}[b]{0.16\textwidth}
        \centering
        \circuitimg{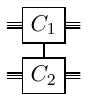}
        \caption{$C_1 \odot C_2$.}
    \end{subfigure}
    \begin{subfigure}[b]{0.16\textwidth}
        \centering
        \circuitimg{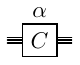}
        \caption{$C^\alpha$.}
    \end{subfigure}
    \vspace{-0.5em}
    \caption{The graphical language for hierarchical quantum circuit diagrams. Note that $C_1 \circ C_2$ is in circuit composition order, rather than matrix composition order.}
    \label{fig:hqc_syntax}
\end{figure}

In this section we summarize the hierarchical quantum circuit language introduced in~\cite{SchoberWesley2026}.
We begin with a gate set $\mathcal{G}$.
For each $g \in \mathcal{G}$, we write $g: n \to m$ to denote that $g$ has $n$ input wires and $m$ output wires.
We assume that all gates are unitary, so $g: n \to n$.
The hierarchical circuits generated by $\mathcal{G}$ are then defined inductively as follows.
\begin{itemize}
\item The the empty wire $\textsf{id}: 1 \to 1$ and the wire crossing $\textsf{swap}: 2 \to 2$ are hierarchical circuits.
\item If $g \in \mathcal{G}$, then $g$ is a hierarchical circuit.
\item If $C_1: n \to n$ and $C_2: n \to n$ are hierarchical circuits, then their \emph{sequential composition} denoted $C_1 \circ C_2: n \to n$ is a hierarchical circuit.
\item If $C_1: n \to n$ and $C_2: m \to m$ are hierarchical circuits, then their \emph{parallel composition} denoted $C_1 \otimes C_2: ( n + m ) \to ( n + m )$ is a hierarchical circuit.
\item If $C_1: n \to n$ and $C_2: m \to m$ are hierarchical circuits, then their \emph{control composition} denoted $C_1 \odot C_2: ( n + m ) \to ( n + m )$ is a hierarchical circuit.
\item If $C: n \to n$ is a hierarchical circuit and $\alpha \in \mathbb{R}$, then the \emph{circuit power} $C^\alpha$ is a hierarchical circuit.
\end{itemize}
The graphical language for these circuits can be found in~\cref{fig:hqc_syntax}.
As suggested by the graphical language, control composition is associative.

Now assume that each gate $g: n \to n$ has a semantic interpretation as some $( 2^n )$-dimensional unitary matrix $U_g$.
The semantic interpretation for hierarchical quantum circuits uses some standard definitions of matrix exponentials and logarithms (see e.g.~\cite{hallLieGroupsLie2015}).
First note that if $M$ is a matrix, then $\sum_{n=0}^\infty \frac{M^n}{n!}$ is a convergent series whose limit is denoted $\exp( M )$.
Moreover, if $U$ is a unitary matrix, then there exists a unique skew Hermitian matrix $\Log( U )$ with spectrum $i( -\pi, \pi ]$ such that $U = \exp( \Log( U ) )$.
Given these facts, the semantic interpretation of a hierarchical quantum circuit is defined inductively as follows, where $\sigma$ is the matrix such that $\sigma \ket{ij} = \ket{ji}$.
\begin{align*}
    \sem{\textsf{id}} &= I
    &
    \sem{\textsf{swap}} &= \sigma
    &
    \sem{g} &= U_g
\end{align*}
\begin{align*}
    \sem{C_1 \circ C_2} &= \sem{C_2}\sem{C_1}
    &
    \sem{C_1 \odot C_2} &= \exp( \Log\sem{C_1} \otimes \Log\sem{C_2} / (i\pi) )
    \\
    \sem{C_1 \otimes C_2} &= \sem{C_1} \otimes \sem{C_2}
    &
    \sem{C^\alpha} &= \exp( \alpha \Log\sem{C} )
\end{align*}
It should be noted that $\sem{U \odot (V \odot W)} = \sem{(U \odot V) \odot W)}$, so this interpretation is well-defined.
When the meaning is clear from context, we omit the $\sem{-}$.

It can be shown that every quantum circuit can be expressed by a hierarchical quantum circuit constructed from only the Pauli $Z$ gate $(\circuitimg{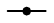})$, the Pauli $X$ gate $(\circuitimg{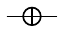})$, and the global scalar $-1$ $(\bullet)$.
For example, the Hadamard gate admits the following decomposition.
\begin{equation*}
    \circuitimg{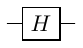}
    =
    \circuitimg{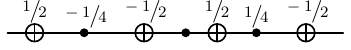}
\end{equation*}
The proof of single-qubit universality rests on the fact that parametrized $Z$-rotations, $X$-rotations, and global phases can be expressed in terms of circuit powers.
\begin{align*}
    \sem*{\circuitimg{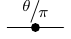}}
    =
    \begin{bsmallmatrix}
        1 & 0 \\
        0 & e^{i\theta}
    \end{bsmallmatrix}
    =
    Z( \theta )
    &
    &
    &
    &
    \sem*{\circuitimg{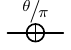}}
    =
    \tfrac{1}{2}
    \begin{bsmallmatrix}
        1 + e^{i\theta} & 1 - e^{i\theta} \\
        1 - e^{i\theta} & 1 + e^{i\theta}
    \end{bsmallmatrix}
    =
    HZ( \theta )H
    &
    &
    &
    &
    \sem*{\circuitimg{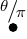}}
    =
    e^{i\theta}
\end{align*}
The multi-qubit case then follows from the fact that standard controlled unitaries  are recovered using control composition in the obvious way suggested by the notation.
\begin{align*}
     \sem*{\circuitimg{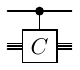}}
    =
    \sem*{\circuitimg{tikz-cache/section_1__definitions__basic_gates__z.pdf} \odot C}
    =
    \exp\left(
        \frac{ \Log ( Z ) \otimes \Log\sem{C} }{ i\pi }
        \right)
    =
    \exp\left( \begin{bsmallmatrix}
        0 & 0 \\
        0 & \Log\sem{C}
    \end{bsmallmatrix} \right)
    =
    \begin{bsmallmatrix}
        I & 0 \\
        0 & \sem{C}
    \end{bsmallmatrix}
\end{align*}
In~\cite{SchoberWesley2026}, a sound and complete equational theory was discovered for these gates.
This means that $\sem{C_1} = \sem{C_2}$ if and only if $C_1$ can be transformed into $C_2$ using the relations in the theory.

\subsection{Bit Order Convention}

All of the circuits in this paper will act on pairs of qubit registers denoted $\ket{ab}$, where $a = a_{n-1}a_{n-2}...a_1a_0$ and $b = b_{n-1}b_{n-2}...b_1b_0$ are (qu)bit-strings encoding integers.
For the sake of readability of the circuit diagrams, and consistency with the existing literature, the circuits in this paper will use the qubit order $\ket{(ab)} := \ket{a_0b_0a_1b_1...a_{n-1}b_{n-1}}$.
That is, the topmost pair of wires hold the least significant bits $a_0,b_0$ of $a$ and $b$, the bottommost pair of wires hold the most significant bits $a_{n-1}$ and $b_{n-1}$, and the bits of $a$ and $b$ have been interleaved.
We denote this reversed-and-interleaved bit ordering with parentheses $(ab) := a_0b_0a_1b_1...a_{n-1}b_{n-1}$.
A subscript $(ab)_{k:l} := a_kb_k...a_{l}b_{l}$ with $0 \leq k \leq l \leq n$ denotes the substring of $(ab)$ from index pair $k$ to index pair $l$.
Likewise subscripts on circuits $C_{k:l}$ denotes that $C$ is a circuit acting on the $l-k+1$ wire pairs $\ket{(ab)_{k:l}}$.

We note that this bit ordering causes the QFT to overlap the $a$ register in the diagrams even though it does nothing there, as illustrated in~\cref{fig:draper_adder,fig:draper_adder_reordered}.

\subsection{Quantum Adders}

A \textit{quantum adder} is a quantum circuit with the following action on computational basis states.
\begin{equation*}
    \text{Adder} : \ket{a}\ket{b} \mapsto \ket{a}\ket{a+b}
\end{equation*}
The input registers $\ket{a}$ and $\ket{b}$ are computational basis states on $n$ qubits each, encoding integers $a = a_{n-1}a_{n-2}...a_1a_0$ and $b = b_{n-1}b_{n-2}...b_1b_0$.
The output register $\ket{a+b}$ contains their sum $s := (a+b)$ modulo $2^n$, hereafter denoted $s = s_{n-1}s_{n-2}...s_1s_0$.
There are many different quantum adders that vary in parameters such size, depth, and number of ancillas needed~\cite{vedralQuantumNetworksElementary1996,draperAdditionQuantumComputer2000,cuccaroNewQuantumRipplecarry2004,takahashiLinearsizeQuantumCircuit2005,takahashiQuantumAdditionCircuits2009,remaudAncillaFreeQuantumAdder2025,gidneyClassicalQuantumAdderConstant2025,remaudQuantumAddersStructural2025,vandaeleAsymptoticallyOptimalQuantum2026}. 
What is relevant to this paper is that they also differ in both native gateset and \textit{structure}, by which we mean the presence or absence of a prefix or suffix when written in recursive form.
We will focus on two quantum adders: one based on the Quantum Fourier Transform (QFT)~\cite{draperAdditionQuantumComputer2000}, and another based on the classically-inspired Ripple-Carry (RC) approach~\cite{takahashiQuantumAdditionCircuits2009}.

Draper~\cite{draperAdditionQuantumComputer2000} proposed the quantum adder shown in~\cref{fig:draper_adder}, where QFT denotes the Quantum Fourier Transform\footnote{We note that the QFT as defined by Draper differs from the typical presentation in that it does not swap the qubits at the end, and hence leaves them in reverse order.
Since the QFT is later followed by the inverse QFT in Draper's adder, the swaps can be neglected.
The swaps are inherited by the implementation of the TA circuit.
We follow Draper's convention.}, $\ket{\hat{b}} := (\text{QFT})\ket{b}$ denotes the Fourier transform of $\ket{b}$, and TA (Transform Addition) is a diagonal circuit consisting of only controlled $Z$ rotations with the action
\begin{equation*}
    \text{TA} : \ket{a}\ket{\hat{b}} \mapsto \ket{a}\ket{\hat{s}}.
\end{equation*}
The QFT-Adder works by first transforming the $\ket{b}$ register into the Fourier basis, then using the TA circuit to add $\ket{a}$ to the $\ket{\hat{b}}$ register, and finally un-transforming $\ket{\hat{s}}$ to $\ket{s}$ with the inverse QFT.
The QFT-Adder's native gateset is $\{H,CZ(2\pi/2^{k})\}$, where $CZ(2\pi/2^{k}) : k \in \mathbb{N}$ is a controlled, parametrized $Z$-rotation gate.
Structurally, the QFT-Adder is a sandwich, meaning its recursive implementation has both a prefix and a suffix.
\begin{figure}
    \centering
    \circuitimg{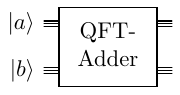}
    =
    \circuitimg{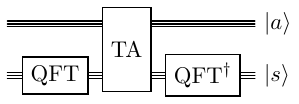}
    \caption{The Quantum Fourier Transform Adder as introduced in~\cite{draperAdditionQuantumComputer2000}.}
    \label{fig:draper_adder}
\end{figure}
\begin{figure}
    \centering
    \circuitimg{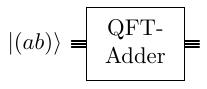}
    =
    \circuitimg{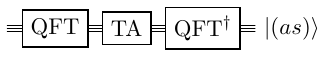}
    \caption{The QFT-Adder drawn with the bit order $(ab) = a_0b_0...a_{n-1}b_{n-1}$ used in this paper. The interleaving of $a$ and $b$ causes the QFT box to overlap the $a$ register, even though it does not act there. Note that these circuits are equivalent up-to conjugation by swaps.}
    \label{fig:draper_adder_reordered}
\end{figure}

Takahashi et al.~\cite{takahashiQuantumAdditionCircuits2009} proposed the quantum adder shown in~\cref{fig:takahashi_adder}, based on the Ripple-Carry (RC) approach~\cite{vedralQuantumNetworksElementary1996,cuccaroNewQuantumRipplecarry2004} from classical reversible circuits.
This adder has seven stages which we draw as numbered boxes; we defer a description of their precise implementation to~\cref{sec:cla_to_rc}.
The RC-Adder works by first computing all of the carry bits $c_j$ needed for the basic grade school addition algorithm.
This is done by the first three stages. 
\begin{equation*}
    (1 \circ 2 \circ 3) : \ket{(ab)} \mapsto \ket{a_0b_0}\left(\bigotimes_{j=1}^{n-1}\ket{a_j + c_j}\ket{b_j + a_j}\right)
\end{equation*}
The remaining stages use the carry bits $c_j$ to compute $s_j = a_j + b_j + c_j$ in the $b$ register, and then uncompute the remaining junk in the $a$ register.
The native gate set of RC-Adder is the classical reversible gateset $\{X,\text{CNOT},\text{Toffoli}\}$.
Its individual stages can all be written recursively with either a prefix or a suffix, but the entire circuit has neither a single prefix nor a single suffix.
\begin{figure}
    \centering
    \circuitimg{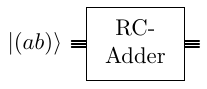}
    =
    \circuitimg{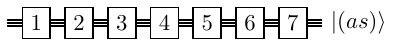}
    \caption{The Ripple-Carry Adder first introduced in~\cite{takahashiQuantumAdditionCircuits2009}. Its seven stages are defined in~\cref{sec:cla_to_rc}.}
    \label{fig:takahashi_adder}
\end{figure}

In~\cref{sec:qft_to_cla} we produce a novel quantum adder as an intermediate transpilation stage that has a natural interpretation as a kind of Carry-Lookahead (CL) Adder. 
This CL-Adder works by computing the bits of $s$ one by one, from most significant to least significant.
It does so by controlling an $X$ gate using the subcircuit CLprefix as the control. 
CLprefix is a diagonal Hermitian circuit whose $\pm 1$ eigenspaces distinguish pairs of bit-strings $a$ and $b$ based on whether their addition would cause a carry onto the next most significant bit or not.
In this way, CL-Adder computes the most significant bit $s_j$ without computing or storing any of the less significant carry bits, hence the name Carry-Lookahead.
\begin{equation*}
    (\text{CLprefix}_{0:j-1} \otimes Z_{j}) \odot X_{j} : \ket{(ab)} \mapsto \ket{(ab)_{0:j-1}a_js_j}
\end{equation*}
The native gate set for this adder involves hierarchical gates. 
CL-Adder can be compiled straightforwardly into the classical reversible gateset $\{X,\text{CNOT},\text{Toffoli}\}$ by repeatedly applying the identity $(Z_1 \otimes Z_2) \odot A = (Z_1 \odot A) \circ (Z_2 \odot A)$, where $A$ is any circuit; however this naive compilation results in an exponentially deep circuit. 
We remark that one expects to be able to find a depth-efficient implementation of CL-Adder, for two reasons. 
First, CLprefix has a highly degenerate spectrum, which suggests it can be implemented efficiently, and second, many depth-efficient Carry-Lookahead Adders already exist in the literature~\cite{draperLogarithmicdepthQuantumCarrylookahead2004,takahashiQuantumAdditionCircuits2009,mogensenReversibleInPlaceCarryLookahead2019}, though notably none that use no ancilla qubits.
To the best of our knowledge this CL-Adder is the first quantum adder of its kind to not use any ancilla qubits.
When written in recursive form, it has a prefix structure.
An example of CL-Adder for $n=6$ is shown in~\cref{fig:cl_adder_ex}.
\begin{figure}
    \centering
    \circuitimg{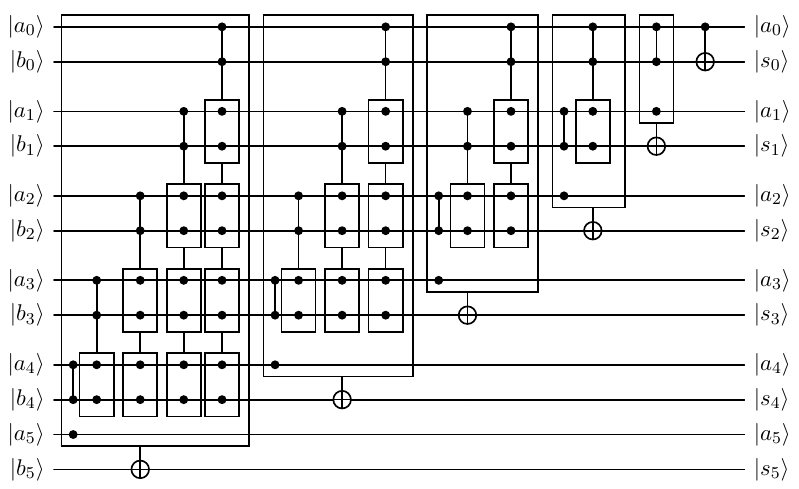}
    \caption{The Carry-Lookahead Adder described in~\cref{sec:qft_to_cla} for $n=6$. The $CZ = Z \odot Z$ gates identify \textit{generating bit pairs} whose addition will generate a carry, and the $Z \otimes Z$ boxes identify \textit{propagating bit pairs} whose addition would propagate a carry from above.}
    \label{fig:cl_adder_ex}
\end{figure}

%% file: 2_qft_sandwich.tex
\section{Writing the QFT-Adder as a recursive sandwich}\label{sec:qft_sandwich}

Both the QFT and the TA circuits can be defined recursively as follows with a prefix structure, shown in~\cref{fig:qft_recursive_defn,fig:ta_recursive_defn}.
\begin{align*}
    &\text{QFTprefix}_{0:n-1} := \bigotimes_{k=0}^{n-1} Z_{b_k}^{2^{k-n}}
    \\
    &\text{QFT}_{0:n-1} := 
    \begin{cases}
        H_{b_0} & n=1 \\
        \left( \text{QFTprefix}_{0:n-2}  \odot X_{b_{n-1}} \right) \circ \left( \text{QFT}_{0:n-2} \otimes H_{b_{n-1}}  \right) & n>1
    \end{cases}
    \\
    &\text{TAprefix}_{0:n-1} := \bigotimes_{k=0}^{n-1} Z_{a_k}^{2^{k-n}}
    \\
    &\text{TA}_{0:n-1} := 
    \begin{cases}
        Z_{a_0} \odot Z_{b_0} & n=1 \\
        \left( \text{TAprefix}_{0:n-2} \odot Z_{b_{n-1}} \right) \circ \left( \text{TA}_{0:n-2} \otimes \left( Z_{a_{n-1}} \odot Z_{b_{n-1}} \right) \right) & n>1
    \end{cases}
\end{align*}
\begin{figure}
    \centering
    \circuitimg{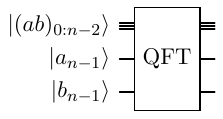}
    =
    \circuitimg{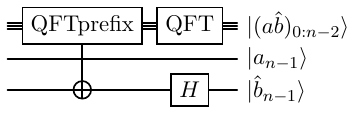}
    \caption{Recursive definition of the Quantum Fourier Transform circuit.}
    \label{fig:qft_recursive_defn}
\end{figure}
\begin{figure}
    \centering
    \circuitimg{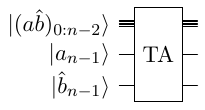}
    =
    \circuitimg{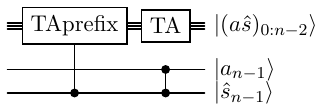}
    \caption{Recursive definition of the Transform Addition circuit.}
    \label{fig:ta_recursive_defn}
\end{figure}

By writing the QFT-Adder (\cref{fig:draper_adder_reordered}) using the recursive forms of the QFT and the TA, we note that $\text{TAprefix}_{0:n-2} \odot Z_{b_{n-1}}$ commutes with QFT$_{0:n-2}$ since it acts only on the $\ket{a}$ register and qubit $\ket{b_{n-1}}$, while QFT$_{0:n-2}$ only acts on $\ket{b_0...b_{n-2}}$.
Applying the identity $HZH = X$ on the $\ket{b_{n-1}}$ wire, we can then merge $(\text{QFTAprefix}_{0:n-2} \odot X_{b_{n-1}})$ and $(\text{TAprefix}_{0:n-2} \odot X_{b_{n-1}})$ into $\text{QFTAprefix}_{0:n-2} \odot X_{b_{n-1}}$, where $\text{QFTAprefix}$\footnote{Note the extra A in the name QFTAprefix compared to QFTprefix -- QFTAprefix is meant to stand for \textit{Quantum Fourier Transform Addition prefix}, since it is the combined prefix for both the Quantum Fourier Transform circuit and the Transform Addition circuit.} is the joint prefix of both the QFT and TA circuits.
\begin{align}\label{eqn:qfta_prefix}
     \text{QFTAprefix}_{0:n-1} := \text{QFTAprefix}_{0:n-1} \circ \text{TAprefix}_{0:n-1} = \bigotimes_{k=0}^{n-1} \left( Z_{a_k}^{2^{k-n}} \otimes Z_{b_k}^{2^{k-n}} \right)
\end{align}
\begin{figure}
    \centering
    \circuitimg{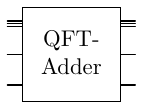}
    =
    \circuitimg{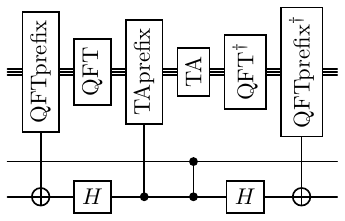}
    =
    \circuitimg{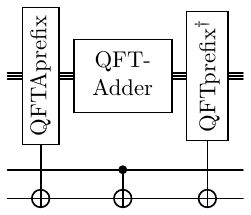}
    \caption{Recursive QFT-sandwich form of the Quantum Fourier Transform Adder.}
    \label{fig:qft_sandwich_small}
\end{figure}
This leads to the following recursive sandwich form of the QFT-Adder, shown in~\cref{fig:qft_sandwich_small}.
\begin{align}\label{eqn:qft_sandwich}
    \begin{split}
        \text{QFT-Adder}_{0:n-1} 
        &:= \left( \text{QFTAprefix}_{0:n-2} \odot X_{b_{n-1}} \right) \\
        &\circ \left( \text{QFT-Adder}_{0:n-2} \otimes (Z_{a_{n-1}} \odot X_{b_{n-1}}) \right) \\
        &\circ \left( (\text{QFTprefix}_{0:n-2})^\dagger \odot X_{b_{n-1}} \right)
    \end{split}
\end{align}

%% file: 3_qft_to_cla.tex
\section{Converting the QFT-Adder into a Carry-Lookahead Adder}\label{sec:qft_to_cla}

In this section we convert the QFT-Adder (\cref{eqn:qft_sandwich,fig:qft_sandwich_small}) into a type of Carry-Lookahead adder which we denote CL-Adder.
This is done by explicitly transpiling $\text{QFT-Adder}_{0:n-1}$ to $\text{CL-Adder}_{0:n-1}$ for all $n$ using their recursive definitions.
These two adders differ in both gateset and structure.
The gateset $\{H,CZ(2\pi/2^{k})\}$ of the QFT-Adder contains small-angle $Z$ rotations, encoded in hierarchical quantum circuits as circuit powers $Z^{(2^{-k})}$ with $k=1,...,n$.
In particular, the prefix circuit QFTAprefix defined in~\cref{eqn:qfta_prefix} contains two instances of each power $2^{-k}$.
The circuits raised to these powers will first be succesively merged in pairs, $2^{-k} \cdot 2^{-k} = 2^{-k+1}$, until QFTAprefix contains exactly one subcircuit raised to each power $2^{-k}$ for $k = 0,1,...,n$.
The $X$-controlled suffix $(\text{QFTprefix}^\dagger)$ will then be pushed through the central Adder circuit, over to the left hand side of the circuit where it will cancel with the remaining subcircuits contained in QFTAprefix with a power different from 1 (i.e. $k\neq0$).
This results in CL-Adder, a quantum adder written over the power-free hierarchical gateset $\{Z,X\}$ with a prefix structure.

We begin by merging powers contained in $\text{QFTAprefix}$.
The merging process relies on two identities.
The first identity is as follows.
\begin{align}\label{eqn:merge_straight}
    &\sem{A},\sem{B} \text{ diagonal Hermitian}
    &\implies
    &
    &\circuitimg{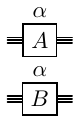}
    =
    \circuitimg{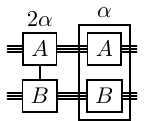}
\end{align}
The next identity will handle the bulk of the merging stage, and relies on the following notion of orthogonality. 
We denote the support of a matrix $A$ by $\supp( A )$.
Notice that $\supp( A )$ will be the complement of the zero eigenspace of $A$.
If $A$ and $B$ are matrices satisfying $\supp( \Log(A) ) \cap \supp( \Log(B) ) = \emptyset$, then we say that $A$ is \emph{orthogonal} to $B$ and write $A \perp B$.
Using this notion, the second identity is as follows.
\begin{align}\label{eqn:merge_zigzag}
    &\begin{array}{c}
        \sem{A},\sem{B},\sem{C},\sem{D} \\
        \text{diagonal Hermitian} \\
        \text{and} \\
        \sem{B} \perp \sem{D}
    \end{array}
    &\implies
    &
    &\circuitimg{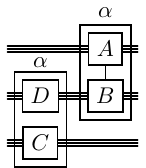}
    =
    \circuitimg{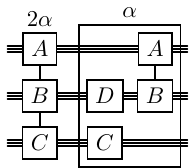}
\end{align}
It will also be convenient to define the following hierarchical circuits. 
These circuits act on $l-k+1$ pairs of wires $\ket{a_kb_k...a_{l}b_{l}}$. 
\begin{align*}
    &\text{Lookahead}_{k:l} 
    :=
    \begin{cases}
        Z_{a_k} \odot Z_{b_k} & k=l \\
        \text{Lookahead}_{k:l-1} \odot (Z_{a_{l}} \otimes Z_{b_{l}}) & k<l
    \end{cases}
    \\
    &\text{CLprefix}_{k:l} 
    :=
    \begin{cases}
        Z_{a_k} \odot Z_{b_k} & k=l \\
        \text{CLprefix}_{k+1:l} \circ \text{Lookahead}_{k:l} & k<l
    \end{cases}
\end{align*}
Expanding the recursion gives the following explicit forms.
\begin{align*}
    &\text{Lookahead}_{k:l} 
    := Z_{a_k} \odot Z_{b_k} \odot \left( \bigodot_{j=k+1}^{l} \left( Z_{a_j} \otimes Z_{b_j} \right) \right)
    \\
    &\text{CLprefix}_{k:l} 
    := \bigcomp_{j=0}^{l-k} \text{Lookahead}_{l-j:l}
\end{align*}
Examples of the $\text{Lookahead}$ and $\text{CLprefix}$ circuits on 5 wire pairs are shown in~\cref{fig:clprefixes_ex}.
\begin{figure}
    \centering
    \circuitimg{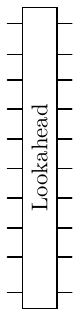}
    =
    \circuitimg{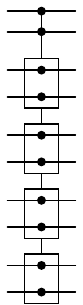}
    \hspace{1.5cm}
    \circuitimg{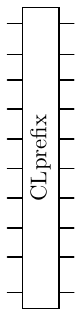}
    =
    \circuitimg{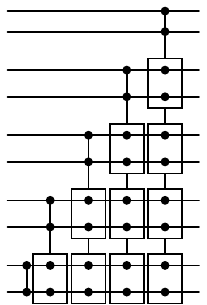}
    \caption{The Lookahead and CLprefix circuits on 5 qubit pairs.}
    \label{fig:clprefixes_ex}
\end{figure}

\subsection{Merging}
Now we merge the powers contained in the QFTAprefix circuit.
Merging a $\text{QFTAprefix}$ circuit acting on $n$ pairs of wires happens in a total of $n$ steps, split across two stages. 
The first stage takes 2 steps.
In it, we apply~\cref{eqn:merge_straight} on all pairs of gates in $\text{QFTAprefix}$ with matching powers $2^{-k}$, twice. 
An example of the first stage with $n=5$ is shown in~\cref{fig:mergestraight_ex}.
\begin{align*}
    &\text{QFTAprefix}_{0:n-1} \overset{(\ref{eqn:qfta_prefix})}{:=} \bigotimes_{k=0}^{n-1} \left( Z_{a_k}^{2^{k-n}} \otimes Z_{b_k}^{2^{k-n}} \right)
    \\
    &\overset{(\ref{eqn:merge_straight})}{=}
    \bigotimes_{k=0}^{n-1} \left( \left( Z_{a_k} \odot Z_{b_k} \right)^{2^{1+k-n}} \circ \left( Z_{a_k} \otimes Z_{b_k} \right)^{2^{k-n}} \right)
    \\
    &\overset{(\ref{eqn:merge_straight})}{=}
    \text{CLprefix}_{n-2:n-1} \circ 
    \\
    &\circ \bigcomp_{k=1}^{n-2} \left( \left( \text{CLprefix}_{n-k-1:n-k-1} \otimes Z_{a_{n-k}} \otimes Z_{b_{n-k}} \right)^{2^{-k}} \circ \left( \text{Lookahead}_{n-k-2:n-k-1} \right)^{2^{-k}} \right) \circ
    \\
    &\circ \left( \text{CLprefix}_{0:0} \otimes Z_{a_{1}} \otimes Z_{b_{1}} \right)^{2^{1-n}} \circ \left( Z_{a_0} \otimes Z_{b_0} \right)^{2^{-n}}.
\end{align*}
\begin{figure}
    \centering
    \circuitimg{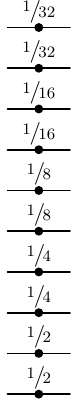}
    $\overset{(\ref{eqn:merge_straight})}{=}$
    \circuitimg{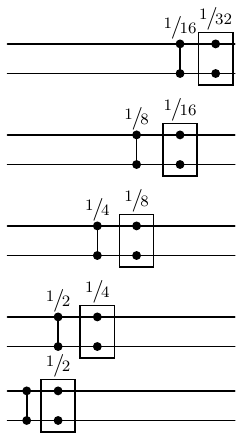}
    $\overset{(\ref{eqn:merge_straight})}{=}$
    \circuitimg{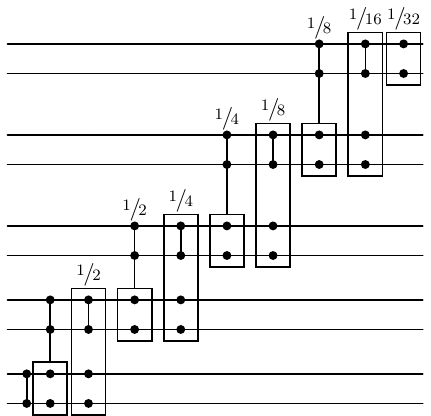}
    \caption{The two steps comprising the first merging stage for $\text{QFTAprefix}$ with $n = 5$.}
    \label{fig:mergestraight_ex}
\end{figure}

The second stage takes $n-2$ steps.
At each step $l=3,...,n$ we apply~\cref{eqn:merge_zigzag} on all pairs of hierarchical gates with matching powers $2^{-k} < 1$.
This step can be repeated because the resulting circuit still contains pairs of hierarchical gates with matching powers $2^{-k}$ that meet the conditions of~\cref{eqn:merge_zigzag}.

More precisely, consider step $l$. 
Assume that the subcircuit consisting of hierarchical gates with powers $2^{-k}$ satisfying $2^{-n+l-2} < 2^{-k} < 1$ has the following form. 
\begin{align*}
    \bigcomp_{k=1}^{n-l+1} \left( (\text{CLprefix}_{n-l+2-k:n-1-k} \otimes Z_{a_{n-k}} \otimes Z_{b_{n-k}} )^{2^{-k}} \circ \left( \text{Lookahead}_{n-l+1-k:n-1-k} \right)^{2^{-k}} \right)
\end{align*}
Noting that 
\begin{equation*}
    \text{Lookahead}_{n-l+1-k:n-1-k} = \left( Z_{a_{n-l+1-k}} \odot Z_{b_{n-l+1-k}} \right) \odot \left( \bigodot_{j=1}^{l-2} Z_{a_{n-l+1-k+j}} \otimes Z_{b_{n-l+1-k+j}} \right),
\end{equation*}
for each $k$ set
\begin{align*}
    &A=Z_{a_{n-l+1-k}} \odot Z_{b_{n-l+1-k}}
    &
    &B=\bigodot_{j=1}^{l-2} Z_{a_{n-l+1-k+j}} \otimes Z_{b_{n-l+1-k+j}}
    \\
    &D=\text{CLprefix}_{n-l+2-k:n-1-k}
    &
    &C=Z_{a_{n-k}} \otimes Z_{b_{n-k}}.
\end{align*}
Note that $A,B,C,D$ are all diagonal and Hermitian since they are all constructed only from the diagonal Hermitian gate $Z$ using the $\{\circ,\otimes,\odot\}$ connectives, which all map the set of diagonal Hermitian circuits onto itself.
Further note that $B \perp D$ since $\Log\sem*{B}$ is unsupported on every subspace of the form $\ket{11}_{a_{j}b_{j}} \otimes (-)$ containing $\ket{11}$ on at least one wire pair $j$, while $\Log\sem*{D}$ is nonzero only on subspaces of this form, by construction.
Since $A,B,C,D$ are arranged as shown in the left-hand side of~\cref{eqn:merge_zigzag} and are diagonal Hermitian, and $B \perp D$, the conditions of~\cref{eqn:merge_zigzag} are satisfied and it can be applied.
Applying~\cref{eqn:merge_zigzag}, we have
\begin{align*}
    \bigcomp_{k=1}^{n-l+1} \left( \left( \text{Lookahead}_{n-l+1-k:n-k} \right)^{2^{-k+1}} \circ (\text{CLprefix}_{n-l+1-k:n-1-k} \otimes Z_{a_{n-k}} \otimes Z_{b_{n-k}} )^{2^{-k}} \right)
\end{align*}
Rearranging terms to group those of like powers,
\begin{align*}
    &\text{Lookahead}_{n-l:n-1} \ \circ \\
    &\bigcomp_{k=1}^{n-l} \left( (\text{CLprefix}_{n-l+1-k:n-1-k} \otimes Z_{a_{n-k+1}} \otimes Z_{b_{n-k+1}} )^{2^{-k}} \circ \left( \text{Lookahead}_{n-l-k:n-k-1} \right)^{2^{-k}} \right) \circ \\
    &\circ (\text{CLprefix}_{0:l-2} \otimes Z_{a_{l-1}} \otimes Z_{b_{l-1}} )^{2^{-n+l-1}}
\end{align*}
Note that the gates on either side of the central product have become isolated in their power.
The $\text{Lookahead}_{n-l:n-1}$ gate (of power 1) will join the other gates of power 1 at the left of the circuit, together forming $\text{CLprefix}_{n-l:n-1}$.
The subcircuit of lowest power $2^{-n+l-1}$ at the right of the circuit has also become isolated and will not be modified further during merging, since merging only affects pairs of subcircuits with like power, and can only increase that power.
Note further that the central product at step $l+1$ is a subcircuit consisting of all hierarchical gates with powers $2^{-n+(l+1)-2} < 2^{-k} < 1$, and again has the same form.
Since this subcircuit is present at step $l=3$, and the application of~\cref{eqn:merge_zigzag} at step $l$ reproduces it at step $l+1$, then by induction, merging will terminate after $l=n$ steps with the following circuit.
\begin{align*}
    &\text{QFTAprefix}_{0:n-1}
    \overset{(\ref{eqn:merge_straight},\ref{eqn:merge_zigzag})}{=}
    \text{CLprefix}_{0:n-1} \circ \left( \bigcomp_{k=1}^{n} \left( \text{CLprefix}_{0:n-1-k} \otimes Z_{a_{n-k}} \otimes Z_{b_{n-k}} \right)^{2^{-k}} \right)
\end{align*}
An example of this circuit for $n=5$ is shown in~\cref{fig:mergezigzag_ex}.
\begin{figure}
    \centering
    \circuitimg{tikz-cache/section_3__examples__merge_straight__1.pdf}
    $\overset{(\ref{eqn:merge_straight},\ref{eqn:merge_zigzag})}{=}$
    \circuitimg{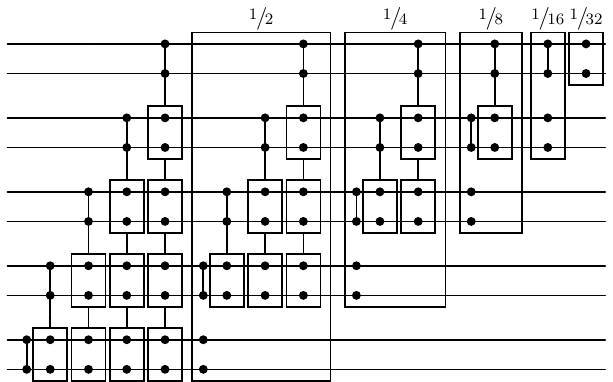}
    \caption{$\text{QFTAprefix}$ after all $n$ steps of both merging stages, with $n = 5$.}
    \label{fig:mergezigzag_ex}
\end{figure}

\subsection{Canceling}
The suffix $(\text{QFTprefix})^\dagger \odot X$ can now be pushed through the central Adder circuit, where it will cancel with the remaining hierarchical subcircuits with powers $2^{-k} < 1$ in the merged $\text{QFTAprefix}$ circuit.
The cancelation process relies on the following identity.
\begin{align}\label{eqn:xcopy}
    &A \text{ Hermitian}
    &\implies
    &
    &\circuitimg{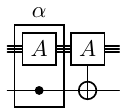}
    =
    \circuitimg{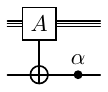}
\end{align}
Canceling all powers $2^{-k} < 1$ will produce the following form for an ancilla-free Carry-Lookahead adder.
\begin{align*}
    &\text{CL-Adder}_{0:n-1} :=
    \begin{cases}
        Z_{a_0} \odot X_{b_0} & n=1 \\
        \left( \text{CLprefix}_{0:n-2} \odot X_{b_{n-1}} \right) \circ \left( \text{CL-Adder}_{0:n-2} \otimes (Z_{a_{n-1}} \odot X_{b_{n-1}}) \right) & n>1
    \end{cases}
\end{align*}
Expanding the recursion (and merging the $X$ controls on the lowest wire) gives the following explicit form.
\begin{align}\label{eqn:cl_adder_defn}
    \text{CL-Adder}_{0:n-1} := \bigcomp_{k=1}^{n} \left( \left( \text{CLprefix}_{0:n-1-k} \otimes Z_{a_{n-k}} \right) \odot X_{b_{n-k}} \right)
\end{align}
\begin{figure}
    \centering
    \circuitimg{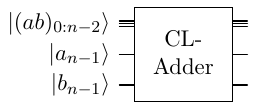}
    =
    \circuitimg{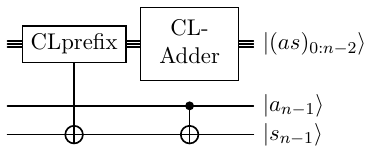}
    \caption{Recursive definition of the Carry-Lookahead Adder.}
    \label{fig:cla_defn}
\end{figure}

We now prove that CL-Adder is a quantum adder, as claimed, by showing that it is equivalent to QFT-Adder.
This is done by explicitly transpiling $\text{CL-Adder}_{0:n-1}$ to $\text{QFT-Adder}_{0:n-1}$ for all $n$.
\begin{proposition}\label{prop:cl=qft}
    $\text{CL-Adder}_{0:n-1} = \text{QFT-Adder}_{0:n-1}$ for all $n$.
\end{proposition}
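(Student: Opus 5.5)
We prove the statement by induction on $n$, comparing the two recursive definitions directly: the recursive form \eqref{eqn:qft_sandwich} of $\text{QFT-Adder}$ and the recursive definition of $\text{CL-Adder}$.

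\textbf{Base case.} For $n=1$ the QFT-Adder is $H_{b_0}\circ(Z_{a_0}\odot Z_{b_0})\circ H_{b_0}$. By $HZH=X$ on the target wire this equals $Z_{a_0}\odot X_{b_0}$, which is $\text{CL-Adder}_{0:0}$.

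\textbf{Inductive step.} Assume $\text{QFT-Adder}_{0:n-2}=\text{CL-Adder}_{0:n-2}$. Then \eqref{eqn:qft_sandwich} gives
\begin{equation*}
\text{QFT-Adder}_{0:n-1}=(\text{QFTAprefix}_{0:n-2}\odot X_{b_{n-1}})\circ(\text{CL-Adder}_{0:n-2}\otimes(Z_{a_{n-1}}\odot X_{b_{n-1}}))\circ((\text{QFTprefix}_{0:n-2})^\dagger\odot X_{b_{n-1}}).
\end{equation*}
Next we apply the merging result of the previous subsection to the first factor. It rewrites $\text{QFTAprefix}_{0:n-2}$ as $\text{CLprefix}_{0:n-2}\circ R$, where
\begin{equation*}
R=\bigcomp_{k=1}^{n-1}(\text{CLprefix}_{0:n-2-k}\otimes Z_{a_{n-1-k}}\otimes Z_{b_{n-1-k}})^{2^{-k}}.
\end{equation*}
All of these factors are diagonal and therefore commute. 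A diagonal control splits a product, since $\Log$ of a product of commuting diagonal unitaries is the sum of the $\Log$s (up to branch issues that cancel in $\exp$). This lets us write the first factor as $(\text{CLprefix}_{0:n-2}\odot X)\circ(R\odot X)$.

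It then remains to show
\begin{equation*}
(R\odot X_{b_{n-1}})\circ\bigl(\text{CL-Adder}_{0:n-2}\otimes(Z_{a_{n-1}}\odot X_{b_{n-1}})\bigr)\circ\bigl((\text{QFTprefix}_{0:n-2})^\dagger\odot X_{b_{n-1}}\bigr)=\text{CL-Adder}_{0:n-2}\otimes(Z_{a_{n-1}}\odot X_{b_{n-1}}),
\end{equation*}
which is equivalent to $(\text{CL-Adder}_{0:n-2}\otimes I)$ conjugating $R\odot X$ into $\text{QFTprefix}_{0:n-2}\odot X$. The $Z_{a_{n-1}}\odot X_{b_{n-1}}$ part commutes with both sides because all controls are on disjoint wires and the targets are $X$'s on the same wire.

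\textbf{Key identity.} Since $\text{CL-Adder}_{0:n-2}$ is a permutation of basis states sending $(a,b)\mapsto(a,a+b)$, conjugating a diagonal control relabels its eigenvalues. The $X$-controlled power $D^{\alpha}\odot X$ for Hermitian diagonal $D$ is determined by $\Log D$, i.e.\ by the projector onto the $-1$ eigenspace of $D$. So the claim reduces to a classical statement about phases: the function
\begin{equation*}
\phi_R(a,b)=\sum_k 2^{-k}\,[\text{factor }k\text{ has eigenvalue }-1\text{ at }(a,b)]
\end{equation*}
evaluated at $(a,a+b)$ should equal $\sum_k 2^{k-(n-1)}\,b_k$, the phase of $\text{QFTprefix}_{0:n-2}$, modulo 2.

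The eigenvalue of $\text{CLprefix}_{0:m}\otimes Z_{a_{m+1}}\otimes Z_{b_{m+1}}$ is $(-1)^{c_{m+1}\oplus a_{m+1}\oplus b_{m+1}}$, using the stated carry-detecting property of CLprefix. Evaluated at $(a,s)$ this gives $(-1)^{b_{m+1}}$-type expressions. Matching coefficients then amounts to the binary identity $\sum_j 2^j s_j\equiv\sum_j 2^j(a_j\oplus b_j\oplus c_j)$ together with the bookkeeping of carries. Equivalently, we can use \eqref{eqn:xcopy} to push the $X$-controlled suffix leftward through $\text{CL-Adder}_{0:n-2}$, wire pair by wire pair, and cancel power by power.

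\textbf{Main obstacle.} The hard part is this last cancellation: verifying that after conjugation by $\text{CL-Adder}_{0:n-2}$ the residual powers exactly match $\text{QFTprefix}^\dagger$. Care is needed with the principal-branch $\Log$ when summing phases $2^{-k}\pi$, which stay inside $(-\pi,\pi]$. We would first check the case $n=3$ explicitly and then make the coefficient matching precise via the carry recursion $c_{j+1}=a_jb_j\oplus c_j(a_j\oplus b_j)$, which is exactly the generate/propagate structure encoded by $\text{Lookahead}$.
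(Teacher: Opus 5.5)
Your skeleton matches the paper's proof. You induct on $n$, substitute the merged form $\text{QFTAprefix}_{0:n-2}=\text{CLprefix}_{0:n-2}\circ R$, peel off $R\odot X_{b_{n-1}}$, and show that the suffix becomes $R^{-1}\odot X_{b_{n-1}}$ after passing the middle. Splitting the control is fine: with $P_-$ the projector onto the $-1$ eigenspace of $X$, one has $D\odot X = D\otimes P_- + I\otimes(I-P_-)$, which is multiplicative in $D$ with no branch issue.

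\textbf{The gap.} The cancellation you defer as the ``main obstacle'' is the entire content of the inductive step, and you leave it unproved. You also expect it to need carry bookkeeping across bit positions, whereas it works term by term.
\begin{itemize}
\item The suffix factor $(Z_{b_{n-1-k}})^{-2^{-k}}$ fails to commute with exactly one term of $\text{CL-Adder}_{0:n-2}$. That term is $(\text{CLprefix}_{0:n-2-k}\otimes Z_{a_{n-1-k}})\odot X_{b_{n-1-k}}$, the only one with target $b_{n-1-k}$.
\item Applying \eqref{eqn:xcopy} with $A=\text{CLprefix}_{0:n-2-k}\otimes Z_{a_{n-1-k}}$ turns it into $(A\otimes Z_{b_{n-1-k}})^{-2^{-k}}$. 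This is exactly the inverse of the $k$-th factor of $R$.
\item The new factor is diagonal and supported on wire pairs $0,\dots,n-1-k$. The terms still to be crossed target $b_{n-k},\dots,b_{n-2}$, so it passes through them freely.
\end{itemize}
That is the paper's proof. Your remark ``equivalently, use \eqref{eqn:xcopy}'' points at it, but you never make this one-term observation or check the commutations.

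\textbf{The semantic route.} It can also be made to work, but as written its direction is reversed and it rests on an unproved fact.
\begin{itemize}
\item With $U=\text{CL-Adder}_{0:n-2}$ you need $U^\dagger\,\text{QFTprefix}_{0:n-2}\,U=R$. That is, the phase of $R$ at $(a,b)$ must equal that of $\text{QFTprefix}_{0:n-2}$ at $(a,a+b)$.
\item You instead require $\phi_R(a,a+b)$ to equal the $\text{QFTprefix}$ phase at $(a,b)$. Evaluating the factors at $(a,s)$ gives bits of $a+s$, not ``$(-1)^{b_{m+1}}$''.
\item In the correct direction, set $m=n-1-k$. The base of the $k$-th factor of $R$ has $-1$ eigenspace $\{a_m\oplus b_m\oplus c_m=1\}=\{s_m=1\}$, so the factor's phase is $e^{i\pi 2^{-k}s_m}$. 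This matches $Z_{b_m}^{2^{-k}}$ at $(a,s)$ bit by bit, with no binary identity or carry recursion needed.
\end{itemize}
This argument does require that $\text{CLprefix}_{0:m-1}$ has eigenvalue $(-1)^{c_m}$, which the paper only asserts informally in the introduction. You would have to prove it. One way: the $-1$ eigenspace of $\text{Lookahead}_{j:m-1}$ is ``generate at $j$ and propagate through $m-1$''. These events are disjoint for distinct $j$, and their union is $\{c_m=1\}$.
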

\begin{proof}
    If $n=1$, we have $\text{CL-Adder}_{0:0} = Z \odot X = \text{CNOT} = \text{QFT-Adder}_{0:0}$ trivially.
    Now assume that $\text{CL-Adder}_{0:n-1} = \text{QFT-Adder}_{0:n-1}$ some $n \geq 1$.  By~\cref{eqn:qft_sandwich},
    \begin{align*}
        \text{QFT-Adder}_{0:n} 
        &\overset{(\ref{eqn:qft_sandwich})}{:=} \left( \text{QFTAprefix}_{0:n-1} \odot X_{b_{n}} \right) \\
        &\circ \left( \text{QFT-Adder}_{0:n-1} \otimes (Z_{a_{n}} \odot X_{b_{n}}) \right) \\
        &\circ \left( (\text{QFTprefix}_{0:n-1})^\dagger \odot X_{b_{n}} \right)
        \\
        &\overset{(\ref{eqn:merge_straight},\ref{eqn:merge_zigzag})}{=} \left( \left[ \text{CLprefix}_{0:n-1} \circ \left( \bigcomp_{k=1}^{n} \left( \text{CLprefix}_{0:n-1-k} \otimes Z_{a_{n-k}} \otimes Z_{b_{n-k}} \right)^{2^{-k}} \right) \right] \odot X_{b_{n}} \right) \\
        &\circ \left( \left[ \bigcomp_{k=1}^{n} \left( \left( \text{CLprefix}_{0:n-1-k} \otimes Z_{a_{n-k}} \right) \odot X_{b_{n-k}} \right) \right] \otimes (Z_{a_{n}} \odot X_{b_{n}}) \right) \\
        &\circ \left( \left[ \bigcomp_{k=1}^{n} (Z_{b_{n-k}})^{-2^{-k}} \right] \odot X_{b_{n}} \right)
    \end{align*}
    where we have used a different but equivalent definition of $\text{QFTprefix}_{0:n-1}$ to match the indexing format of the other terms.
    Pushing $(\text{QFTprefix}_{0:n-1})^\dagger \odot X_{b_{n}}$ from the right side of the circuit through the central $\text{CL-Adder}_{0:n-1} \otimes (Z_{a_{n}} \odot X_{b_{n}})$, we note two things.
    First, that both terms are controlled on the same $X_{b_{n}}$ gate, so their commutation relation is determined by the commutation relation of just $(\text{QFTprefix}_{0:n-1})^\dagger$ and $\text{CL-Adder}_{0:n-1}$.
    Second, each term $(Z_{b_{n-k}})^{-2^{-k}}$ contained in $(\text{QFTprefix}_{0:n-1})^\dagger$ fails to commute with exactly one term of $\text{CL-Adder}_{0:n-1}$, namely the $k$th term, due to the presence of the anti-commuting $X_{b_{n-k}}$.
    Applying~\cref{eqn:xcopy} with $A=(\text{CLprefix}_{0:n-1-k} \otimes Z_{a_{n-k}})$, we have the following.
    \begin{align*}
        &\overset{(\ref{eqn:xcopy})}{=} \left( \left[ \text{CLprefix}_{0:n-1} \circ \left( \bigcomp_{k=1}^{n} \left( \text{CLprefix}_{0:n-1-k} \otimes Z_{a_{n-k}} \otimes Z_{b_{n-k}} \right)^{2^{-k}} \right) \right] \odot X_{b_{n}} \right) \\
        &\circ \left( \left[ \bigcomp_{k=1}^{n} (\text{CLprefix}_{0:n-1-k} \otimes Z_{a_{n-k}} \otimes Z_{b_{n-k}})^{-2^{-k}} \right] \odot X_{b_{n}} \right) \\
        &\circ \left( \left[ \bigcomp_{k=1}^{n} \left( \left( \text{CLprefix}_{0:n-1-k} \otimes Z_{a_{n-k}} \right) \odot X_{b_{n-k}} \right) \right] \otimes (Z_{a_{n}} \odot X_{b_{n}}) \right)
    \end{align*}
    Canceling terms then recovers $\text{CL-Adder}_{0:n}$.
    \begin{align*}
        \text{QFT-Adder}_{0:n} 
        &= \left( \text{CLprefix}_{0:n-1} \odot X_{b_{n}} \right) \circ \\
        &\circ \left( \left[ \bigcomp_{k=1}^{n} \left( \left( \text{CLprefix}_{0:n-1-k} \otimes Z_{a_{n-k}} \right) \odot X_{b_{n-k}} \right) \right] \otimes (Z_{a_{n}} \odot X_{b_{n}}) \right) \\
        &= \left( \text{CLprefix}_{0:n-1} \odot X_{b_{n}} \right) \circ \left( \text{CL-Adder}_{0:n-1} \otimes (Z_{a_{n}} \odot X_{b_{n}}) \right) \\
        &= \text{CL-Adder}_{0:n}.
    \end{align*}
    Since $\text{QFT-Adder}_{0:0} = \text{CL-Adder}_{0:0}$ and $\text{QFT-Adder}_{0:n-1} = \text{CL-Adder}_{0:n-1} \implies \text{QFT-Adder}_{0:n} = \text{CL-Adder}_{0:n}$, then by induction $\text{QFT-Adder}_{0:n-1} = \text{CL-Adder}_{0:n-1}$ for all $n$.
\end{proof}

%% file: 4_cla_to_rc.tex
\section{Converting the Carry-Lookahead Adder into the Ripple-Carry Adder}\label{sec:cla_to_rc}
In this section we convert the CL-Adder (\cref{eqn:cl_adder_defn,fig:cla_defn}, full example in~\cref{fig:cl_adder_ex}) into the Ripple-Carry adder~\cite{takahashiQuantumAdditionCircuits2009}, which we denote RC-Adder. 
\begin{figure}
    \centering
    \circuitimg{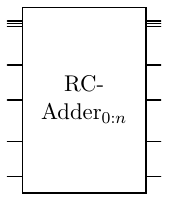}
    =
    \circuitimg{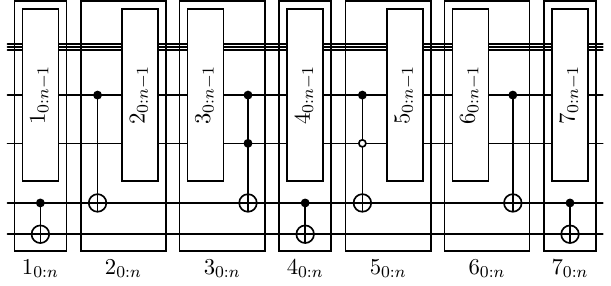}
    \caption{Recursive definition of the Ripple-Carry Adder.}
    \label{fig:rc_adder_defn}
\end{figure}
$\text{RC-Adder}_{0:n-1}$ has seven stages, which we draw as numbered boxes.
We remark that $\text{CNOT}(c,t) := Z_{c} \odot X_{t}$ and $\text{Toffoli}(c1,c2,t) := Z_{c1} \odot Z_{c2} \odot X_{t}$.
\begin{enumerate}
    \item Apply $\text{CNOT}(a_i,b_i)$ for $i=1$ to $n-1$.
    \item Apply $\text{CNOT}(a_{i},a_{i+1})$ for $i=n-2$ to 1.
    \item Apply $\text{Toffoli}(a_i,b_i,a_{i+1})$ for $i=0$ to $n-2$.
    \item Apply stage 1 again.
    \item For $i=n-2$ to $0$, apply a $X_{b_i}$ gate, followed by $\text{Toffoli}(a_i,b_i,a_{i+1})$, followed by another $X_{b_i}$ gate.
    \item Apply stage 2 in reverse order (for $i=1$ to $n-2$).
    \item Apply stage 1 again, and apply $\text{CNOT}(a_0,b_0)$.
\end{enumerate}
RC-Adder uses the gateset $\{X,\text{CNOT},\text{Toffoli}\}$ and can be written in recursive form as shown in~\cref{fig:rc_adder_defn}. 
Each individual stage has a particularly simple structure, being either a prefix (stages 2 and 5), a suffix (stages 3 and 6), or either (stages 1, 4, and 7).
Since $XZX = -Z$, we will condense the three gates in stage 5 to the single gate $Z_{a_i} \odot (-Z)_{b_i} \odot X_{a_{i+1}}$, where $-Z$ is drawn as a white dot. 
This gate has the semantics suggested by the diagram notation, i.e. it is a Toffoli gate, but controlled by the state $\ket{10}_{a_ib_i}$ instead of $\ket{11}_{a_ib_i}$.
Hence each stage has a prefix/suffix containing exactly one gate.

To show that CL-Adder is equivalent to RC-Adder, we will again explicitly transpile $\text{CL-Adder}_{0:n-1}$ to $\text{RC-Adder}_{0:n-1}$ for all $n$.
This is done in the same way as in~\cref{sec:qft_to_cla}, by appealing to their recursive definitions.
We will push the prefix of $\text{CL-Adder}$ into the middle of $\text{RC-Adder}$ to reproduce the gates needed for the prefixes and suffixes of the seven stages shown in~\cref{fig:rc_adder_defn}.
In particular, we will push the circuit $\left( \text{CLprefix}_{0:n-1} \otimes Z_{a_{n}} \right) \odot X_{b_{n}}$ the first three numbered stages of the RC-Adder $1_{0:n}$, $2_{0:n}$, and $3_{0:n-1}$.
We denote the resulting circuit $C_{0:n}^{(j)}$ for $j = \{0,1,2,3\}$ and define it in the following way.
\begin{align*}
    &C_{0:n}^{(0)} := \left( \text{CLprefix}_{0:n-1} \otimes Z_{a_{n}} \right) \odot X_{b_{n}} \\
    &C_{0:n}^{(0)} \circ 1_{0:n} = 1_{0:n} \circ C_{0:n}^{(1)} \\
    &C_{0:n}^{(1)} \circ 2_{0:n} = 2_{0:n} \circ C_{0:n}^{(2)} \\
    &C_{0:n}^{(2)} \circ 3_{0:n-1} = 3_{0:n-1} \circ C_{0:n}^{(3)}
\end{align*}
In other words, $C_{0:n}^{(j)}$ is what remains of $C_{0:n}^{(0)}$ after being pushed through the first $j$ stages of $1_{0:n} \circ 2_{0:n} \circ 3_{0:n-1}$.
To prepare for the transpilation, we first compute $C_{0:n}^{(1)},C_{0:n}^{(2)}$, and $C_{0:n}^{(3)}$.

\subsection{Pushing through stage 1}
First we will compute $C_{0:n}^{(1)}$ using the two following identities.
~\cref{eqn:cnot_zz} is a corollary of~\cref{eqn:xcopy} when $A = Z$ and $\alpha = 1$, though it is better known as one of the standard `Pauli pushing' identities.
~\cref{eqn:cnot_cz} describes how $CZ = Z \odot Z$ changes when pushed through a CNOT.
\begin{center}
    \begin{minipage}{0.345\textwidth}
        \begin{align}\label{eqn:cnot_zz}
        \circuitimg{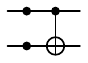}
        &=
        \circuitimg{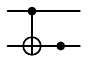}
    \end{align}
    \end{minipage}
    \hspace{0.2\textwidth}%
    \begin{minipage}{0.345\textwidth}
        \begin{align}\label{eqn:cnot_cz}
        \circuitimg{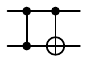}
        &=
        \circuitimg{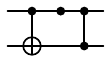}
    \end{align}
    \end{minipage}
\end{center}

Now consider how $C_{0:n}^{(0)}$ changes when pushed through $1_{0:n}$. 
Since $1_{0:n}$ is a vertical stack of $\text{CNOT}(a_j,b_j)$ gates (with one missing on the top wire pair $j=0$), we need only consider how $C_{0:n}^{(0)}$ changes by the action of CNOT on pairs of its wires $a_j,b_j$.
Since $\text{CLprefix}_{0:n-1}$ is comprised of $\text{Lookahead}$ gates, which are in turn comprised of $(Z_{a_j} \otimes Z_{b_j})$ and $(Z_{a_j} \odot Z_{b_j})$ pairs joined by $\odot$, it suffices to consider how these gates change when pushed through CNOT.

\begin{figure}
    \centering
    \circuitimg{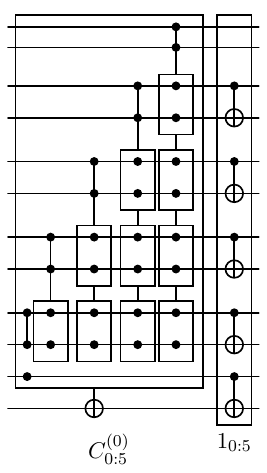}
    =
    \circuitimg{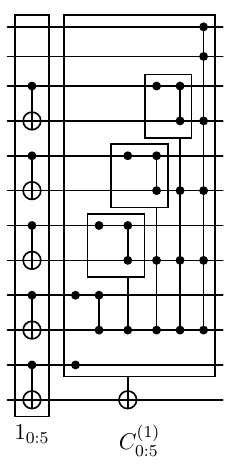}
    =
    \circuitimg{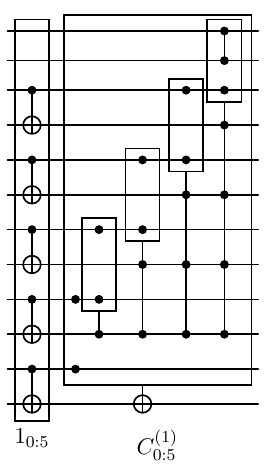}
    \caption{Pushing $C_{0:5}^{(0)}$ through $1_{0:5}$ results in $C_{0:5}^{(1)}$.}
    \label{fig:345_push_1}
\end{figure}
Each such wire pair has the form of either~\cref{eqn:cnot_zz} or~\cref{eqn:cnot_cz}.
Applying~\cref{eqn:cnot_zz} or~\cref{eqn:cnot_cz} on every pair of wires in $\text{Lookahead}$ gives the following, where $L(1,2) := Z_1 \circ (Z_1 \odot Z_2)$ is shorthand for the right-hand side of~\cref{eqn:cnot_cz} without the CNOT (so named because its circuit is shaped like an upside-down L).
\begin{align*}
    &\text{Lookahead}_{k:n-1} \circ 1_{0:n} = 
    \begin{cases}
        1_{0:n} \circ \left( Z_{a_0} \odot Z_{b_0} \odot \left( \bigodot_{j=1}^{n-1} Z_{b_j} \right) \right) & k=0 \\
        1_{0:n} \circ \left( L(a_k,b_k) \odot \left( \bigodot_{j=k+1}^{n-1} Z_{b_j} \right) \right) & k>0
    \end{cases}
\end{align*}
Pushing $C_{0:n}^{(0)}$ through $1_{0:n}$ therefore gives the following expressions for $C_{0:n}^{(1)}$, where in the last step we have only regrouped terms.
An example of this stage is shown in~\cref{fig:345_push_1}.
\begin{align}\label{eqn:push_1}
    \begin{split}
        C_{0:n}^{(1)} &= (1_{0:n})^\dagger \circ C_{0:n}^{(0)} \circ 1_{0:n} \\
        &= (1_{0:n})^\dagger \circ \left( \left[ \left( \bigcomp_{k=1}^{n} \text{Lookahead}_{n-k:n-1} \right) \otimes Z_{a_{n}} \right] \odot X_{b_{n}} \right) \circ 1_{0:n} \\
        &\overset{(\ref{eqn:cnot_zz},\ref{eqn:cnot_cz})}{=} \Bigg[  
        L(a_{n-1},b_{n-1}) \circ \bigcomp_{k=2}^{n-1} \left( L(a_{n-k},b_{n-k}) \odot \bigodot_{j=1}^{k-1} Z_{b_{n-k+j}} \right) \circ \\
        &\hspace{1cm} \circ \left( CZ(a_0,b_0) \odot \bigodot_{j=1}^{n-1} Z_{b_j} \right) \otimes Z_{a_n} \Bigg] \odot X_{b_n} \\
        &= \Bigg[ (Z_{a_{n-1}} \otimes Z_{a_{n}}) \circ \bigcomp_{k=2}^{n-1} \left( \left( Z_{a_{n-k}} \otimes Z_{a_{n-k+1}} \right) \odot \bigodot_{j=1}^{k-1} Z_{b_{n-k+j}} \right) \circ \\
        &\hspace{1cm} \circ \left( \left( CZ(a_0,b_0) \otimes Z_{a_1} \right) \odot \bigodot_{j=1}^{n-1} Z_{b_j} \right) \Bigg] \odot X_{b_n}
    \end{split}
\end{align}

\subsection{Pushing through stage 2}

\begin{figure}
    \centering
    \circuitimg{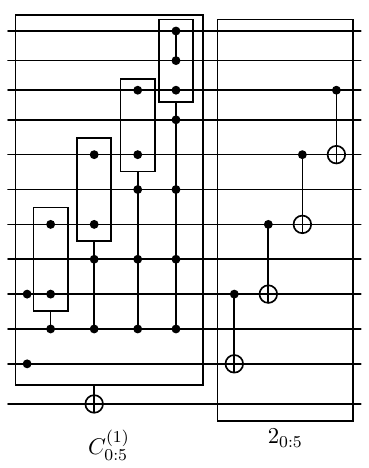}
    =
    \circuitimg{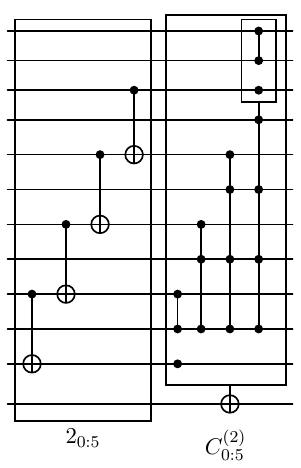}
    \caption{Pushing $C_{0:5}^{(1)}$ through $2_{0:5}$ results in $C_{0:5}^{(2)}$.}
    \label{fig:345_push_2}
\end{figure}
Next we compute $C_{0:n}^{(2)}$ by pushing $C_{0:n}^{(1)}$ through $2_{0:n}$. 
$2_{0:n}$ consists of an upward ladder of CNOT gates acting exclusively on the $a$ register; namely, on pairs $(a_j,a_{j+1})$ for $j = n-1,...,1$.
Consider the first of these gates.
$\text{CNOT}(a_{n-1},a_{n}) = Z_{a_{n-1}} \odot X_{a_{n}}$ does not commute with only one gate in $C_{0:n}^{(1)}$, namely a single $Z_{a_n}$ gate.
As this gate comes with a matching partner $Z_{a_{n-1}}$, we can apply~\cref{eqn:cnot_zz} to push the pair $Z_{a_{n-1}} \otimes Z_{a_{n}}$ through this CNOT, resulting in the $Z_{a_{n-1}}$ gate being deleted.
Now consider the next CNOT in the ladder, $\text{CNOT}(a_{n-2},a_{n-1}) = Z_{a_{n-2}} \odot X_{a_{n-1}}$.
Once again this does not commute with only one gate in what remains of $C_{0:n}^{(1)}$, a single $Z_{a_{n-1}}$ gate, which again has a partner $Z_{a_{n-2}}$ contained in the same controlled subcircuit $(Z_{a_{n-2}} \otimes Z_{a_{n-1}}) \odot Z_{b_{n-1}}$.
Hence we can apply~\cref{eqn:cnot_zz} again in this controlled subcircuit, deleting $Z_{a_{n-2}}$. 
We can then repeat this process on every controlled $Z_{a_{j}} \otimes Z_{a_{j+1}}$ subcircuit, deleting $Z_{a_{j}}$ gates for $j = n-1,...,1$ until $C_{0:n}^{(1)}$ has been pushed entirely through $2_{0:n}$, resulting in $C_{0:n}^{(2)}$.
An example of this stage is shown in~\cref{fig:345_push_2}.
\begin{align}\label{eqn:push_2}
    \begin{split}
        C_{0:n}^{(2)} &= (2_{0:n})^\dagger \circ C_{0:n}^{(1)} \circ 2_{0:n} \\
        &= \Bigg[ Z_{a_{n}} \circ \bigcomp_{k=2}^{n-1} \left( Z_{a_{n-k+1}} \odot \bigodot_{j=1}^{k-1} Z_{b_{n-k+j}} \right) \circ \\
        &\hspace{1cm} \circ \left( \left( CZ(a_0,b_0) \otimes Z_{a_1} \right) \odot \bigodot_{j=1}^{n-1} Z_{b_j} \right) \Bigg] \odot X_{b_n}
    \end{split}
\end{align}

\subsection{Pushing through stage 3}

Next we compute $C_{0:n}^{(3)}$ by pushing $C_{0:n}^{(2)}$ through $3_{0:n-1}$. 
This stage consists of a repeated application of the following identity, which is a corollary of~\cref{eqn:xcopy} when $A = Z \odot Z$ and $\alpha = 1$.
\begin{align}\label{eqn:toffoli_cz_z}
    \circuitimg{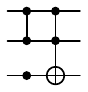}
    =
    \circuitimg{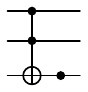}
\end{align}

$3_{0:n-1}$ consists of a downward ladder of Toffoli gates.
As in stage 2, consider what happens when we push $C_{0:n}^{(2)}$ through the first of these gates, $\text{Toffoli}(a_0,b_0,a_1) = Z_{a_0} \odot Z_{b_0} \odot X_{a_1}$.
Again there is only one gate inside $C_{0:n}^{(2)}$ that does not commute with $\text{Toffoli}(a_0,b_0,a_1)$, namely a single $Z_{a_1}$ contained in a controlled subcircuit $Z_{a_0} \odot Z_{b_0} \otimes Z_{a_1}$ at the right edge of the $C_{0:n}^{(2)}$ circuit.
Applying~\cref{eqn:toffoli_cz_z}, we find that the $Z_{a_0} \odot Z_{b_0}$ part of the controlled subcircuit is simply deleted when pushed past $\text{Toffoli}(a_0,b_0,a_1)$.
Grouping the remaining controls together, we find that this causes another instance of a controlled subcircuit $Z_{a_1} \odot Z_{b_1} \otimes Z_{b_2}$ to appear, on the next wire pair down.
An example of this single step is shown in~\cref{fig:345_push_3_single_step}.
\begin{figure}
    \centering
    \circuitimg{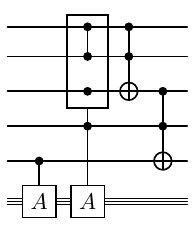}
    =
    \circuitimg{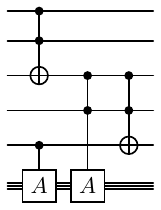}
    =
    \circuitimg{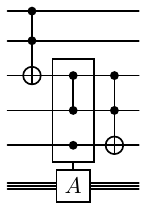}
    \caption{Pushing the rightmost controlled box contained in $C_{0:n}^{(2)}$ through a single Toffoli gate in $3_{0:n-1}$. $A$ is any circuit.}
    \label{fig:345_push_3_single_step}
\end{figure}
This process repeats with every Toffoli gate in $3_{0:n-1}$.
As $C_{0:n}^{(2)}$ pushes through the Toffoli ladder $3_{0:n-1}$, it collapses until only a single instance of the subcircuit $Z_{a_{n-1}} \odot Z_{b_{n-1}} \otimes Z_{a_n}$ remains.
This leads to the following form for $C_{0:n}^{(3)}$.\footnote{We now see that $C_{0:n}^{(3)}$ acts nontrivially on only $(ab)_{n-1:n}$, so more precisely it could be called $C_{n-1:n}^{(3)}$.}
An example of this pushing stage is shown in~\cref{fig:345_push_3}.
\begin{align}\label{eqn:push_3}
    C_{0:n}^{(3)} &= (3_{0:n-1})^\dagger \circ C_{0:n}^{(2)} \circ 3_{0:n-1} = \left[ \left( Z_{a_{n-1}} \odot Z_{b_{n-1}} \right) \otimes Z_{a_n} \right] \odot X_{b_n}
\end{align}

\begin{figure}
    \centering
    \circuitimg{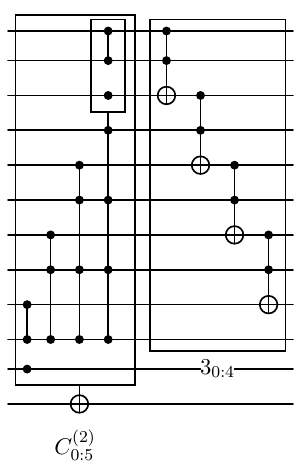}
    =
    \circuitimg{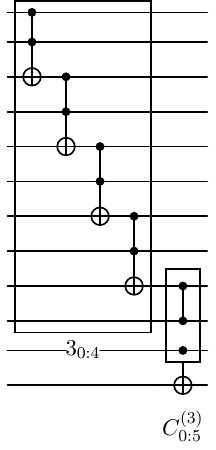}
    \caption{Pushing $C_{0:5}^{(2)}$ through $3_{0:4}$ results in $C_{0:5}^{(3)}$.}
    \label{fig:345_push_3}
\end{figure}

Now we are ready to transpile.
\begin{proposition}\label{prop:rc=cl}
    $\text{RC-Adder}_{0:n-1} = \text{CL-Adder}_{0:n-1}$ for all $n$.
\end{proposition}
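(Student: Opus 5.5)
We argue by induction on $n$, following the pattern of the proof of \cref{prop:cl=qft} and using the recursive definitions of both adders. For $n=1$, $\text{RC-Adder}_{0:0}$ reduces to stage 7 alone, i.e.\ $\text{CNOT}(a_0,b_0) = Z_{a_0} \odot X_{b_0} = \text{CL-Adder}_{0:0}$; the other stages are empty.

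For the inductive step, assume $\text{RC-Adder}_{0:n-1} = \text{CL-Adder}_{0:n-1}$. Start from the recursive form of the CL-Adder,
\[
\text{CL-Adder}_{0:n} = \left( \text{CLprefix}_{0:n-1} \odot X_{b_n} \right) \circ \left( \text{CL-Adder}_{0:n-1} \otimes (Z_{a_n} \odot X_{b_n}) \right).
\]
First, absorb the last factor $Z_{a_n}\odot X_{b_n}$ into the prefix. This gate commutes with $\text{CL-Adder}_{0:n-1}$, since it acts on disjoint wires. Both it and $\text{CLprefix}_{0:n-1}\odot X_{b_n}$ are controlled on $X_{b_n}$ by diagonal Hermitian controls. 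Since the two controls are products of $Z$'s on disjoint wires, the product of the two controlled-$X$ gates should be expressible as $C^{(0)}_{0:n} = (\text{CLprefix}_{0:n-1}\otimes Z_{a_n})\odot X_{b_n}$. This is exactly the ``merging the $X$ controls on the lowest wire'' used in \cref{eqn:cl_adder_defn}, so
\[
\text{CL-Adder}_{0:n} = C^{(0)}_{0:n} \circ \text{RC-Adder}_{0:n-1},
\]
after applying the induction hypothesis.

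Next, expand $\text{RC-Adder}_{0:n-1}$ into its seven stages. Then insert the identity $1_{0:n}\circ 2_{0:n}\circ 3_{0:n-1}\circ (3_{0:n-1})^\dagger \circ (2_{0:n})^\dagger\circ(1_{0:n})^\dagger$ immediately after $C^{(0)}_{0:n}$. By the defining relations of $C^{(j)}_{0:n}$ and \cref{eqn:push_1,eqn:push_2,eqn:push_3}, this rewrites $C^{(0)}_{0:n}\circ 1_{0:n}\circ 2_{0:n}\circ 3_{0:n-1}$ as $1_{0:n}\circ 2_{0:n}\circ 3_{0:n-1}\circ C^{(3)}_{0:n}$, with $C^{(3)}_{0:n} = \bigl((Z_{a_{n-1}}\odot Z_{b_{n-1}})\otimes Z_{a_n}\bigr)\odot X_{b_n}$. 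The leftover inverses $(3_{0:n-1})^\dagger\circ(2_{0:n})^\dagger\circ(1_{0:n})^\dagger$ must then be combined with the size-$(n-1)$ stages.

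The bookkeeping goal is to check that the size-$n$ stages $1_{0:n},2_{0:n},3_{0:n}$ differ from the size-$(n-1)$ ones only by a single prefix or suffix gate, as in \cref{fig:rc_adder_defn}:
\begin{itemize}
\item stage 1 gains $\text{CNOT}(a_n,b_n)$;
\item stage 2 gains $\text{CNOT}(a_{n-1},a_n)$ at its front;
\item stage 3 gains $\text{Toffoli}(a_{n-1},b_{n-1},a_n)$ at its end.
\end{itemize}
All these gates are self-inverse. The remaining gate $C^{(3)}_{0:n}$ is itself one such piece. Using \cref{eqn:cnot_zz} on the $Z_{a_n}$ control together with the $\text{CNOT}(a_n,b_n)$ gates, $C^{(3)}_{0:n}$ should split as the product of the Toffoli suffix of stage 3 (conjugated appropriately) and the new gates of stages 4–7. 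These are the stage-4 $\text{CNOT}(a_n,b_n)$, the stage-5 prefix $Z_{a_{n-1}}\odot(-Z)_{b_{n-1}}\odot X_{a_n}$, the stage-6 suffix $\text{CNOT}(a_{n-1},a_n)$, and the stage-7 $\text{CNOT}(a_n,b_n)$. The useful relation here is $\text{Toffoli}(a_{n-1},b_{n-1},a_n)\circ(Z_{a_{n-1}}\odot(-Z)_{b_{n-1}}\odot X_{a_n}) = \text{CNOT}(a_{n-1},a_n)$. It should let the white-dot Toffoli and the extra CNOT on $(a_{n-1},a_n)$ be produced from the $Z_{a_{n-1}}\odot Z_{b_{n-1}}$ control.

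The main obstacle is this final matching step. The pushed gates must be commuted past stages 4–7 of $\text{RC-Adder}_{0:n-1}$ into exactly the prefix and suffix positions required by \cref{fig:rc_adder_defn}. This needs commutation checks: gates on wires $a_n,b_n$ commute with everything of size $n-1$ except gates touching $a_{n-1}$ or $b_{n-1}$. I would handle it by writing each size-$n$ stage as its size-$(n-1)$ version times its single boundary gate, and then verifying the residual identity on the four wires $a_{n-1},b_{n-1},a_n,b_n$ directly on computational basis states. Since everything involved is a classical reversible permutation, this is a finite check.
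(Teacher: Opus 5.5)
Your reduction $\text{CL-Adder}_{0:n}=C^{(0)}_{0:n}\circ\text{RC-Adder}_{0:n-1}$ is correct, and pushing $C^{(0)}_{0:n}$ through $1_{0:n}\circ 2_{0:n}\circ 3_{0:n-1}$ is legitimate. The gap is in the final matching step, which is where all the content lies. Write $G=\text{CNOT}(a_{n-1},a_n)\circ\text{CNOT}(a_n,b_n)$. Your leftover $(3_{0:n-1})^\dagger\circ(2_{0:n})^\dagger\circ(1_{0:n})^\dagger\circ 1_{0:n-1}\circ 2_{0:n-1}\circ 3_{0:n-1}$ equals $(3_{0:n-1})^\dagger\circ(2_{0:n-1})^\dagger\circ G\circ 2_{0:n-1}\circ 3_{0:n-1}$, and this does not reduce to boundary gates. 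The reason is that $\text{CNOT}(a_{n-1},a_n)$ commutes neither with $\text{CNOT}(a_{n-2},a_{n-1})\in 2_{0:n-1}$ nor with $\text{Toffoli}(a_{n-2},b_{n-2},a_{n-1})\in 3_{0:n-1}$. So this conjugate flips $a_n$ according to a function of the lower wires. The same obstruction recurs when the stage-6 suffix $\text{CNOT}(a_{n-1},a_n)$ has to end up after $5_{0:n-1}$ and $6_{0:n-1}$, both of which contain gates targeting $a_{n-1}$. Hence there is no residual identity supported on $a_{n-1},b_{n-1},a_n,b_n$. A basis-state check would have to range over all $2n+2$ wires, which is no longer an inductive argument.

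Your accounting of $C^{(3)}_{0:n}$ is also off. Together with the $\text{CNOT}(a_{n-1},b_{n-1})$ of $4_{0:n-1}$, it yields only the new gates of stages 3, 4 and 5: $C^{(3)}_{0:n}\circ\text{CNOT}(a_{n-1},b_{n-1})=\text{Toffoli}(a_{n-1},b_{n-1},a_n)\circ\text{CNOT}(a_{n-1},b_{n-1})\circ\text{CNOT}(a_n,b_n)\circ\bigl(Z_{a_{n-1}}\odot(-Z)_{b_{n-1}}\odot X_{a_n}\bigr)$, which is the content of \cref{eqn:push_4}. The stage-6 and stage-7 gates must come from $G$. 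Your relation $\text{Toffoli}\circ(\text{white-dot Toffoli})=\text{CNOT}(a_{n-1},a_n)$ is true, but it does not help. In $\text{RC-Adder}_{0:n}$ those two gates are separated by $\text{CNOT}(a_{n-1},b_{n-1})$ and $\text{CNOT}(a_n,b_n)$, neither of which commutes with them.

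The missing idea is a \emph{global} commutation. $\text{CNOT}(a_{n-1},a_n)$ and $\text{CNOT}(a_n,b_n)$ commute with $\text{CL-Adder}_{0:n-1}$ as a whole, since it touches $a_{n-1}$ only through $Z$-controls and does not touch wire pair $n$. By the induction hypothesis they therefore commute with $\text{RC-Adder}_{0:n-1}$, even though they do not commute with its individual stages. The paper uses this at the outset. It inserts $\text{CNOT}(a_n,b_n)\circ\text{CNOT}(a_{n-1},a_n)$ before $\text{CL-Adder}_{0:n-1}$ and $\text{CNOT}(a_{n-1},a_n)\circ\text{CNOT}(a_n,b_n)$ after it. It then absorbs these four gates into $1_{0:n},2_{0:n}$ and $6_{0:n},7_{0:n}$, using that $\text{CNOT}(a_{n-1},a_n)$ commutes with $1_{0:n-1}$ and $7_{0:n-1}$.

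Your route can be repaired the same way. Let $P=1_{0:n-1}\circ 2_{0:n-1}\circ 3_{0:n-1}$ and $Q=4_{0:n-1}\circ 5_{0:n-1}\circ 6_{0:n-1}\circ 7_{0:n-1}$. Since $G$ commutes with $1_{0:n-1}$, your leftover followed by $Q$ is $P^\dagger\circ G\circ P\circ Q=P^\dagger\circ P\circ Q\circ G=Q\circ G$. This supplies $6_{0:n}\circ 7_{0:n}$ at the end. After that, only the genuinely local identity $3_{0:n-1}\circ C^{(3)}_{0:n}\circ 4_{0:n-1}\circ 5_{0:n-1}=3_{0:n}\circ 4_{0:n}\circ 5_{0:n}$ remains.
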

\begin{proof}
    If $n=1$, the only stage that contributes a gate is $7$, and we have $\text{RC-Adder}_{0:0} = \text{CNOT} = Z \odot X = \text{CL-Adder}_{0:0}$ trivially.

    \begin{figure}
        \centering
        \circuitimg{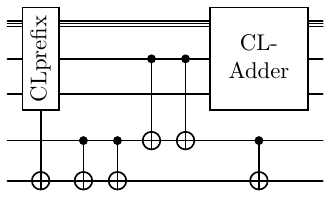}
        =
        \circuitimg{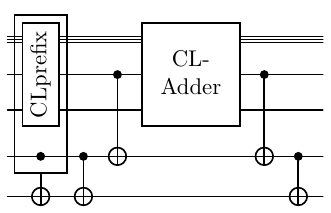}
        \caption{Preparing CNOTs to merge into stages 1,2,6, and 7 of the CL-Adder.}
        \label{fig:1267_create_cnots}
    \end{figure}
    Now assume that $\text{RC-Adder}_{0:n-1} = \text{CL-Adder}_{0:n-1}$ for some $n \geq 1$. 
    Plugging in the recursive definition of $\text{CL-Adder}_{0:n}$ and inserting two pairs of (self-inverse) CNOT gates, we note that $\text{CNOT}(a_{n-1},a_{n}) = Z_{a_{n-1}} \odot X_{a_{n}}$ commutes with $\text{CL-Adder}_{0:n-1}$. 
    This is most easily seen from the recursive definition of $\text{CL-Adder}$. 
    $\text{CL-Adder}_{0:n-1}$ and $\text{CNOT}(a_{n-1},a_{n})$ overlap only on wire $a_{n-1}$, where both share a $Z$ gate.
    Since $Z$ commutes with itself, it then follows that $\text{CNOT}(a_{n-1},a_{n})$ commutes with $\text{CL-Adder}_{0:n-1}$.
    Hence we can freely push $\text{CNOT}(a_{n-1},a_{n})$ through $\text{CL-Adder}_{0:n-1}$.
    Grouping one of the $\text{CNOT}(a_{n},b_{n})$ gates with $\text{CLprefix}_{0:n} \odot X_{b_n}$ then leads to the following form, shown in~\cref{fig:1267_create_cnots}.
    \begin{align*}
        \text{CL-Adder}_{0:n} &:= \left( \text{CLprefix}_{0:n-1} \odot X_{b_{n}} \right) \circ \left( \text{CL-Adder}_{0:n-1} \otimes (Z_{a_{n}} \odot X_{b_{n}}) \right) \\
        &= \left( \text{CLprefix}_{0:n-1} \odot X_{b_{n}} \right) \circ \\
        &\hspace{1cm} \circ \left( Z_{a_{n}} \odot X_{b_{n}} \right) \circ \left( Z_{a_{n}} \odot X_{b_{n}} \right) \circ \\
        &\hspace{1cm} \circ \left( Z_{a_{n-1}} \odot X_{a_{n}} \right) \circ \left( Z_{a_{n-1}} \odot X_{a_{n}} \right) \circ \\
        &\hspace{1cm} \circ \left( \text{CL-Adder}_{0:n-1} \otimes (Z_{a_{n}} \odot X_{b_{n}}) \right) \\
        &= \left( \left( \text{CLprefix}_{0:n-1} \otimes Z_{a_{n}} \right) \odot X_{b_{n}} \right) \circ \\
        &\hspace{1cm} \circ \left( Z_{a_{n}} \odot X_{b_{n}} \right) \circ \left( Z_{a_{n-1}} \odot X_{a_{n}} \right) \circ \\
        &\hspace{1cm} \circ \text{CL-Adder}_{0:n-1} \ \circ \\
        &\hspace{1cm} \circ \left( Z_{a_{n-1}} \odot X_{a_{n}} \right) \circ \left( Z_{a_{n}} \odot X_{b_{n}} \right)
    \end{align*}
    \begin{figure}
        \centering
        \circuitimg{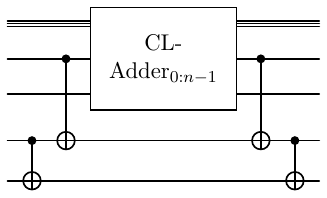}
        =
        \circuitimg{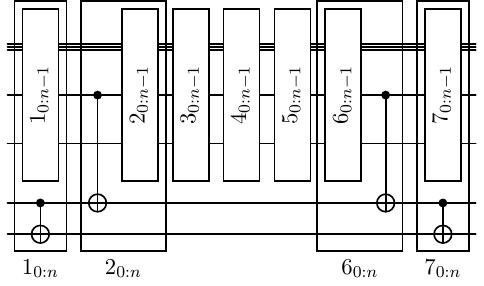}
        \caption{Merging CNOTs into stages 1,2,6, and 7 of CL-Adder. Stages 3,4, and 5 are still missing their respective gates.}
        \label{fig:1267_combine}
    \end{figure}
    Now we apply the inductive hypothesis to replace $\text{CL-Adder}_{0:n-1}$ with $\text{RC-Adder}_{0:n-1}$.
    Inserting the recursive definition of $\text{RC-Adder}_{0:n-1}$, we note that $\text{CNOT}(a_{n-1},a_{n})$ commutes with $1_{0:n-1}$ and $7_{0:n-1}$ for the same reason as above.
    Hence we can push each $\text{CNOT}(a_{n-1},a_{n})$ on the left and right side through $1_{0:n-1}$ and $7_{0:n-1}$.
    The four CNOTs can then be grouped with $1_{0:n-1},2_{0:n-1},6_{0:n-1}$ and $7_{0:n-1}$ to form $1_{0:n},2_{0:n},6_{0:n}$ and $7_{0:n}$. 
    This is shown in~\cref{fig:1267_combine}.
    \begin{align*}
        \text{CL-Adder}_{0:n} &= C_{0:n}^{(0)} \circ 1_{0:n} \circ 2_{0:n} \circ 3_{0:n-1} \circ 4_{0:n-1} \circ 5_{0:n-1} \circ 6_{0:n} \circ 7_{0:n}
    \end{align*}
    Now we apply~\cref{eqn:push_1,eqn:push_2,eqn:push_3} to push $C_{0:n}^{(0)}$ through the first three stages.
    \begin{align*}
        \text{CL-Adder}_{0:n} &= \left( C_{0:n}^{(0)} \right) \circ 1_{0:n} \circ 2_{0:n} \circ 3_{0:n-1} \circ 4_{0:n-1} \circ 5_{0:n-1} \circ 6_{0:n} \circ 7_{0:n} \\
        &\overset{(\ref{eqn:push_1})}{=} 1_{0:n} \circ \left( C_{0:n}^{(1)} \right) \circ 2_{0:n} \circ 3_{0:n-1} \circ 4_{0:n-1} \circ 5_{0:n-1} \circ 6_{0:n} \circ 7_{0:n} \\
        &\overset{(\ref{eqn:push_2})}{=} 1_{0:n} \circ 2_{0:n} \circ \left( C_{0:n}^{(2)} \right) \circ 3_{0:n-1} \circ 4_{0:n-1} \circ 5_{0:n-1} \circ 6_{0:n} \circ 7_{0:n} \\
        &\overset{(\ref{eqn:push_3})}{=} 1_{0:n} \circ 2_{0:n} \circ 3_{0:n-1} \circ \left( C_{0:n}^{(3)} \right) \circ 4_{0:n-1} \circ 5_{0:n-1} \circ 6_{0:n} \circ 7_{0:n} \\
    \end{align*}
    Finally, we apply the following identity, which follows from~\cref{eqn:toffoli_cz_z} along with the usual Pauli anticommutation identity $\circuitimg{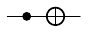} = \circuitimg{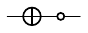}$.
    \begin{align}\label{eqn:push_4}
        \circuitimg{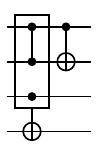}
        =
        \circuitimg{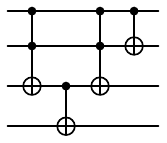}
        =
        \circuitimg{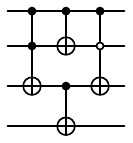}
    \end{align}
    Note that the three gates $\text{Toffoli}(a_{n-1},b_{n-1},a_n)$, $\text{CNOT}(a_n,b_n)$, and $Z_{a_{n-1}} \odot (-Z)_{a_{n-1}} \odot X_{a_{n}}$ in~\cref{eqn:push_4} are exactly the three missing gates in $3_{0:n}$, $4_{0:n}$, and $5_{0:n}$ respectively.
    Hence we can merge them to recover $3_{0:n}$, $4_{0:n}$, and $5_{0:n}$.
    This is shown in~\cref{fig:345_push_4}.
    \begin{align*}
        3_{0:n-1} \circ C_{0:n}^{(3)} \circ 4_{0:n-1} \circ 5_{0:n-1} = 3_{0:n} \circ 4_{0:n} \circ 5_{0:n}
    \end{align*}
    This finishes the transpilation.
    \begin{align*}
        \text{CL-Adder}_{0:n} &= 1_{0:n} \circ 2_{0:n} \circ 3_{0:n} \circ 4_{0:n} \circ 5_{0:n} \circ 6_{0:n} \circ 7_{0:n} = \text{RC-Adder}_{0:n}
    \end{align*}
    Since $\text{RC-Adder}_1 = \text{CL-Adder}_1$ and $\text{RC-Adder}_{0:n-1} = \text{CL-Adder}_{0:n-1} \implies \text{RC-Adder}_{0:n} = \text{CL-Adder}_{0:n}$, then by induction we have $\text{RC-Adder}_{0:n-1} = \text{CL-Adder}_{0:n-1}$ for all $n$, which complete the proof.
\end{proof}

\begin{figure}
    \centering
    \circuitimg{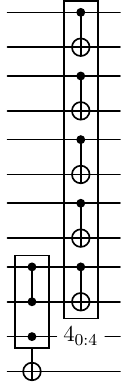}
    =
    \circuitimg{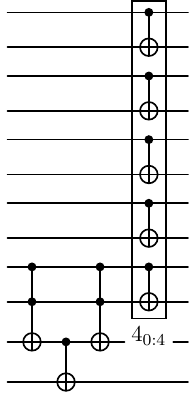}
    =
    \circuitimg{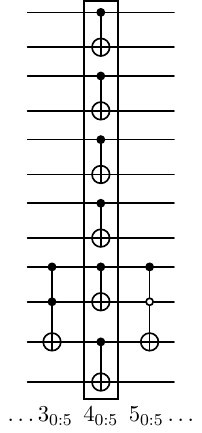}
    \caption{Pushing $C_{0:5}^{(3)}$ through $4_{0:4}$ results in the three missing gates to recover stages 3, 4, and 5 of Takahashi ripple-carry adder~\cite{takahashiQuantumAdditionCircuits2009}.}
    \label{fig:345_push_4}
\end{figure}

\begin{corollary}\label{prop:rc=qft}
    $\text{RC-Adder}_{0:n-1} = \text{QFT-Adder}_{n-1}$ for all $n$.
\end{corollary}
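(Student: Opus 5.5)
The corollary follows by chaining the two transpilations already established. The target notation $\text{QFT-Adder}_{n-1}$ is shorthand for $\text{QFT-Adder}_{0:n-1}$, the $n$-pair adder on $\ket{(ab)_{0:n-1}}$; the index convention is the only thing to reconcile.

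Fix $n \geq 1$. By \cref{prop:rc=cl}, $\text{RC-Adder}_{0:n-1} = \text{CL-Adder}_{0:n-1}$, and by \cref{prop:cl=qft}, $\text{CL-Adder}_{0:n-1} = \text{QFT-Adder}_{0:n-1}$. Both equalities hold as equalities of hierarchical circuits modulo the equational theory of~\cite{SchoberWesley2026}, hence also semantically, so transitivity gives
\begin{equation*}
    \text{RC-Adder}_{0:n-1} = \text{CL-Adder}_{0:n-1} = \text{QFT-Adder}_{0:n-1}.
\end{equation*}
Both propositions are stated for all $n$ on the same register $\ket{(ab)_{0:n-1}}$ with the same interleaved bit order. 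Therefore no reindexing or conjugation by swaps is needed beyond what is already built into the reordered form of the QFT-Adder in \cref{fig:draper_adder_reordered}, which is the form used in \cref{eqn:qft_sandwich}.

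The one point I would check carefully is that the two propositions really refer to the same object $\text{CL-Adder}_{0:n-1}$. \cref{prop:cl=qft} uses the recursive definition, while \cref{prop:rc=cl} uses the same recursive definition, plugged in during its inductive step. The explicit form \cref{eqn:cl_adder_defn} is only an unrolling of that recursion. So both rely on the single recursive definition, and the base cases agree: for $n=1$, both proofs reduce $\text{CL-Adder}_{0:0}$ to the CNOT $Z_{a_0} \odot X_{b_0}$. Once this is confirmed, the corollary is immediate, and there is no genuine obstacle.

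As a consequence, $\text{RC-Adder}$ inherits the adder semantics $\ket{a}\ket{b} \mapsto \ket{a}\ket{a+b \bmod 2^n}$ from Draper's QFT-Adder. Read in the other direction, the chain is an explicit gate-level transpilation from the QFT-Adder to the RC-Adder, obtained by composing the rewrite sequences in the two proofs.
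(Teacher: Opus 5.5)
Your proposal is correct and matches the paper's proof, which simply chains \cref{prop:cl=qft} and \cref{prop:rc=cl} through $\text{CL-Adder}_{0:n-1}$ by transitivity. Reading $\text{QFT-Adder}_{n-1}$ in the statement as $\text{QFT-Adder}_{0:n-1}$ is the right way to reconcile the indexing.
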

\begin{proof}
    Follows from~\cref{prop:cl=qft,prop:rc=cl}.
\end{proof}

%% file: 5_conclusion.tex
\section{Conclusion}\label{sec:conclusion}

We have used the technique of hierarchical quantum circuits to study the structural properties of various types of quantum adders.
In particular, we converted the Quantum Fourier Transform adder~\cite{draperAdditionQuantumComputer2000} into one of quantum adders based on the Ripple-Carry technique from classical reversible logic~\cite{takahashiQuantumAdditionCircuits2009}.
This conversion took the form of an explicit gate-level transpilation, and essentially produced a path through the space of quantum adders that uses only local circuit equivalences at each step of the path.
We highlighted one intermediate quantum adder along this path that has a natural interpretation as an ancilla-free Carry-Lookahead adder.